\documentclass[conference]{IEEEtran}


\usepackage[cmex10]{amsmath}
\usepackage[dvipdfmx]{graphicx}
\usepackage{array}
\usepackage{mdwmath}
\usepackage{mdwtab}
\hyphenation{op-tical net-works semi-conduc-tor}
\usepackage{amssymb}

\newtheorem{theorem}{Theorem}
\newtheorem{lemma}{Lemma}
\newtheorem{definition}{Definition}
\newtheorem{corollary}{Corollary}
\newtheorem{remark}{Remark}

\newcommand{\qed}{\hfill \IEEEQED}

\DeclareMathAlphabet{\bm}{OML}{cmm}{b}{it}

\newcommand{\bol}[1]{\mathbf{#1}}
\newcommand{\rom}[1]{\mathrm{#1}}


\allowdisplaybreaks[2]

\begin{document}
%
\title{Non-Asymptotic Analysis of Privacy Amplification via R\'enyi Entropy and Inf-Spectral Entropy}



%
\author{\IEEEauthorblockN{Shun Watanabe\IEEEauthorrefmark{1} and
Masahito Hayashi\IEEEauthorrefmark{2}}
\IEEEauthorblockA{\IEEEauthorrefmark{1}
Department of  Information Science and Intelligent Systems, 
University of Tokushima, Tokushima, Japan, \\
Email: shun-wata@is.tokushima-u.ac.jp \\
\IEEEauthorrefmark{2}Graduate School of Mathematics, Nagoya University, Japan, \\
and Centre for Quantum Technologies, National University of Singapore, Singapore. \\
E-mail: masahito@math.nagoya-u.ac.jp}}


\maketitle

\begin{abstract}
This paper investigates the privacy amplification problem, 
and compares the existing two bounds: the exponential bound
derived by one of the authors and the min-entropy bound derived
by Renner. It turns out that the exponential bound is better than 
the min-entropy bound when a security parameter is rather small
for a block length, and that the min-entropy bound is better than 
the exponential bound when a security parameter is rather large
for a block length. 
Furthermore, we present another bound that interpolates
the exponential bound and the min-entropy bound by a hybrid use of
the R\'enyi entropy and the inf-spectral entropy.
\end{abstract}


%
\IEEEpeerreviewmaketitle

\section{Introduction}

The privacy amplification \cite{bennett:95} is a technique to distill a secret key
from a source that is partially known to an eavesdropper, usually referred to
as Eve. The privacy amplification is regarded as an indispensable tool
in the information theoretic security, and it has been studied in many
literatures (eg. \cite{renner:05b,renner:05d,tomamichel:phd,tomamichel:11b,hayashi:10,hayashi:10b,tomamichel:12}). 

Recently, the non-asymptotic analysis of coding problems has attracted 
considerable attention  \cite{hayashi:09, polyanskiy:10}.
Especially for the channel coding problem, the relation between
various types of non-asymptotic bounds are extensively compared in
\cite{polyanskiy:10}.  

The performance of the privacy amplification is typically
characterized by the smooth minimum entropy or the 
inf-spectral entropy \cite{renner:05b,datta:09,tomamichel:12}. 
There is also another approach, the exponential bound, which 
has been investigated by one of the authors \cite{hayashi:10b}.
So far, the relation between non-asymptotic bounds derived by these two approaches
has not been clarified.
The first purpose of this paper is to compare
the min-entropy bound, which is derived by the smooth minimum entropy framework,
and the exponential bound. Actually, it turns out that the exponential bound is better than
the min-entropy bound when a security parameter 
is rather small for a block length. In the following, we explain 
a reason for this result.

In the achievability part of 
the smooth entropy framework \cite{renner:05b} 
or the information spectrum approach \cite{han:book},
a performance criterion of a problem, such as an error probability or a security 
parameter, is usually upper bounded by a formula consisting of two terms.
One of the terms is caused by the smoothing error, which corresponds to
a tail probability of atypical outcomes. The other is caused by typical outcomes.
In the following, let us call the former one the type 1 error term and the latter one
the type 2 error term respectively.

To derive a tight bound in total, we need to tightly bound
both the type 1 and type 2 error terms.
This fact has been recognized in literatures.
Indeed, one of the authors derived the state-of-the-art error exponent
for the classical-quantum channel coding by tightly bounding both types of
error terms \cite{hayashi:07b}.
In \cite{polyanskiy:10}, Polyanskiy {\em et.~al.}~derived a non-asymptotic
bound of the channel coding, which is called the DT bound, by tightly bounding both types of
error terms. The DT bound remarkably improves on the so-called
Feinstein bound because the type 2 error term is loosely bounded
in the Feinstein bound. The improvement is especially remarkable
when a required error probability is rather small for a block length.

For the privacy amplification problem, 
one of the authors derived the state-of-the-art exponent
of the variational distance by tightly bounding both types of
error terms \cite{hayashi:10b}. 
On the other hand, the type 2 error term is loosely bounded in
the bound derived via the smooth minimum entropy \cite{renner:05b}.

As is expected from the above argument,
the exponential bound turns out to be better
than the min-entropy bound when a security parameter 
is rather small for a block length.
For rather large security parameters, the min-entropy bound is better
than the exponential bound. This is because we derive the exponential bound
by using the large deviation technique \cite{dembo-zeitouni-book}.
The large deviation technique is only tight when a threshold of  a tail probability
is away from the average, and this is not the case when a security parameter
is rather large for a block length.
As the second purpose of this paper, we derive a bound that 
interpolates the exponential bound and the min-entropy bound.
This is done by a hybrid use of the R\'enyi entropy and the inf-spectral entropy.
It turns out that the hybrid bound is better than both the exponential bound
and the min-entropy bound for whole ranges of security parameters.

The rest of the paper is organized as follows.
In Section \ref{section:preliminaries}, we summarize known bounds on
the privacy amplification. In Section \ref{section:hybrid}, we propose
a novel bound by using the R\'enyi entropy and the inf-spectral entropy.
In Section \ref{section:numerical-calcuation}, we compare the bounds numerically.

\section{Preliminaries}
\label{section:preliminaries}

In this section, we review the problem setting and known results on
the privacy amplification. Although most of results in this section were stated
explicitly or implicitly in literatures, we restate them for reader's convenience.
Especially, Theorem \ref{theorem:one-shot}, 
Theorem \ref{theorem:spectrum}, and Theorem \ref{theorem:gaussian-approximation} are
classical analogue of those obtained in \cite{tomamichel:12} for the quantum setting,
where the distance to evaluate the smoothing is different. 

\subsection{Problem Formulation}

For a set ${\cal A}$, let ${\cal P}({\cal A})$ be the set of all 
probability distribution on ${\cal A}$. It is also convenient to introduce
the set $\bar{{\cal P}}({\cal A})$ of all sub-normalized non-negative functions.

Let $P_{XZ} \in \bar{{\cal P}}({\cal X} \times {\cal Z})$ be a sub-normalized non-negative function.
For a function $f: {\cal X} \to {\cal S}$ and the key $S = f(X)$, let
\begin{eqnarray*}
P_{SZ}(s,z) = \sum_{x \in f^{-1}(s)} P_{XZ}(x,z).
\end{eqnarray*}
We define the security by
\begin{eqnarray*}
d(f|P_{XZ}) = d(P_{SZ}, P_{\bar{S}} \times P_Z),
\end{eqnarray*}
where $P_{\bar{S}}$ is the uniform distribution on ${\cal S}$
and 
\begin{eqnarray}
\label{eq:definition-of-distance}
d(P,Q) := \frac{1}{2} \sum_a | P(a) - Q(a)|
\end{eqnarray}
for $P,Q \in \bar{{\cal P}}({\cal A})$.

Although the quantity $d(f|P_{XZ})$ has no operational meaning 
for unnormalized $P_{XZ}$, it will be used to derive bounds on $d(f|P_{XZ})$
for normalized $P_{XZ}$. For distribution $P_{XZ} \in {\cal P}({\cal X} \times {\cal Z})$ and 
security parameter $\varepsilon \ge 0$, we are interested in characterizing 
\begin{eqnarray*}
\ell(P_{XZ},\varepsilon) = \sup\{ \log |{\cal S}| : \exists f: {\cal X} \to {\cal S} \mbox{ s.t. } d(f|P_{XZ}) \le \varepsilon \}.
\end{eqnarray*}

\subsection{Min Entropy Framework}
\label{subsection:min-entropy}

In this section, we review the smooth minimum entropy framework
that was mainly introduced and developed by Renner and 
his collaborators \cite{renner:05b,renner:05d,tomamichel:09,tomamichel:10,tomamichel:phd}.

\begin{definition}
For $P_{XZ} \in \bar{{\cal P}}({\cal X} \times {\cal Z})$ and a normalized $R_Z \in {\cal P}({\cal Z})$,
we define
\begin{eqnarray*}
H_{\min}(P_{XZ}|R_Z) = - \log \max_{x \in {\cal X} \atop z \in \rom{supp}(R_Z)} \frac{P_{XZ}(x,z)}{R_Z(z)}.
\end{eqnarray*}
Then, we define
\begin{eqnarray*}
\bar{H}_{\min}^\varepsilon(P_{XZ}|R_Z) = \max_{Q_{XZ} \in \bar{{\cal B}}^\varepsilon(P_{XZ})} H_{\min}(Q_{XZ}|R_Z),
\end{eqnarray*}
where 
\begin{eqnarray*}
\bar{{\cal B}}^\varepsilon(P_{XZ}) = \left\{ Q_{XZ} \in \bar{{\cal P}}({\cal X} \times {\cal Z}) : d(P_{XZ}, Q_{XZ}) \le \varepsilon \right\}.
\end{eqnarray*}
We also define
\begin{eqnarray*}
H_{\min}^\varepsilon(P_{XZ}|R_Z) = \max_{Q_{XZ} \in {\cal B}^\varepsilon(P_{XZ})} H_{\min}(Q_{XZ}|R_Z),
\end{eqnarray*}
where 
\begin{eqnarray*}
{\cal B}^\varepsilon(P_{XZ}) = \left\{ Q_{XZ} \in {\cal P}({\cal X} \times {\cal Z}) : d(P_{XZ}, Q_{XZ}) \le \varepsilon \right\}.
\end{eqnarray*}
\end{definition}

The following is a key lemma to derive 
every lower bound on $\ell(P_{XZ},\varepsilon)$.
\begin{lemma}[Leftover Hash:\cite{renner:05b}]
\label{lemma:left-over}
Let $F$ be the uniform random variable on a set of universal 2 hash family ${\cal F}$. Then, 
for $P_{XZ} \in \bar{{\cal P}}({\cal X} \times {\cal Z})$ and
$R_Z \in {\cal P}({\cal Z})$, we have\footnote{Technically,
$R_Z$ must be such that $\rom{supp}(P_Z) \subset \rom{supp}(R_Z)$.}
\begin{eqnarray*}
\mathbb{E}_F[d(F|P_{XZ}) ] \le \frac{1}{2} \sqrt{|{\cal S}| e^{- H_2(P_{XZ}|R_Z)}},
\end{eqnarray*}
where 
\begin{eqnarray*}
H_2(P_{XZ}|R_Z) = - \log \sum_{x \in {\cal X} \atop z \in \rom{supp}(R_Z)} \frac{P_{XZ}(x,z)^2}{R_Z(z)}
\end{eqnarray*}
is the conditional R\'enyi entropy of order $2$ relative to $R_Z$.
\end{lemma}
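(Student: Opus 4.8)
The plan is to reduce the claim to a second-moment computation and then exploit the universal-2 property of ${\cal F}$. First I would unfold the definition of the security, writing
\[
d(F|P_{XZ}) = \frac{1}{2}\sum_{s,z}\Bigl| P_{SZ}(s,z) - \frac{P_Z(z)}{|{\cal S}|}\Bigr|,
\]
where $S=F(X)$ and $P_{SZ}(s,z)=\sum_{x:\,F(x)=s}P_{XZ}(x,z)$, so that $P_{\bar{S}}\times P_Z$ contributes the term $P_Z(z)/|{\cal S}|$. To turn this $L_1$ quantity into an $L_2$ one that matches $H_2$, I would bring in the reference $R_Z$ and apply the Cauchy--Schwarz inequality with weights $R_Z(z)$: writing each summand as $\sqrt{R_Z(z)}\cdot \bigl|P_{SZ}(s,z)-P_Z(z)/|{\cal S}|\bigr|/\sqrt{R_Z(z)}$ gives
\[
d(F|P_{XZ}) \le \frac{1}{2}\sqrt{\sum_{s,z}R_Z(z)}\;\sqrt{\sum_{s,z}\frac{\bigl(P_{SZ}(s,z)-P_Z(z)/|{\cal S}|\bigr)^2}{R_Z(z)}}.
\]
Since $R_Z$ is normalized, $\sum_{s,z}R_Z(z)=|{\cal S}|$, which already produces the factor $\sqrt{|{\cal S}|}$ in the target bound.

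Next I would take the expectation over $F$ and move it inside the square root by Jensen's inequality, so the task becomes bounding $\mathbb{E}_F[(P_{SZ}(s,z)-P_Z(z)/|{\cal S}|)^2]$ for each $(s,z)$. The key observation is that $\mathbb{E}_F[P_{SZ}(s,z)] = \sum_x P_{XZ}(x,z)\Pr[F(x)=s] = P_Z(z)/|{\cal S}|$, so this quantity is exactly the variance of $P_{SZ}(s,z)$. Expanding the square yields a double sum over $x,x'$ of $P_{XZ}(x,z)P_{XZ}(x',z)\Pr[F(x)=s,\,F(x')=s]$; summing over $s$ first is what makes the computation clean, because $\sum_s \Pr[F(x)=s,\,F(x')=s]=\Pr[F(x)=F(x')]$. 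For $x=x'$ this equals $1$ and contributes $\sum_x P_{XZ}(x,z)^2$, while for $x\ne x'$ the universal-2 property bounds it by $1/|{\cal S}|$, contributing at most $P_Z(z)^2/|{\cal S}|$. This cross term precisely cancels the $\sum_s (P_Z(z)/|{\cal S}|)^2 = P_Z(z)^2/|{\cal S}|$ subtracted in forming the variance, leaving $\sum_s \mathrm{Var}_F[P_{SZ}(s,z)]\le \sum_x P_{XZ}(x,z)^2$.

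Summing over $z$ against $1/R_Z(z)$ then gives $\mathbb{E}_F\sum_{s,z}(\cdots)^2/R_Z(z)\le \sum_{x,z}P_{XZ}(x,z)^2/R_Z(z)=e^{-H_2(P_{XZ}|R_Z)}$, and substituting into the Cauchy--Schwarz estimate produces the claim. I expect the main obstacle to be the variance step: one must sum over $s$ before invoking universality, and check that the $x\ne x'$ contribution cancels the mean-square term \emph{exactly} rather than merely up to a constant, so that only the diagonal $\sum_x P_{XZ}(x,z)^2$ survives. I would also verify that the argument goes through for sub-normalized $P_{XZ}$, which only requires $\sum_x P_{XZ}(x,z)=P_Z(z)$ and $\sum_{x\ne x'}P_{XZ}(x,z)P_{XZ}(x',z)\le P_Z(z)^2$, both of which hold without normalization.
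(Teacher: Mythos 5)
Your overall strategy is sound and is essentially the standard proof of the leftover hash lemma (the paper itself gives no proof of Lemma~\ref{lemma:left-over}, importing it from Renner's thesis \cite{renner:05b}, so your argument is judged on its own merits): Cauchy--Schwarz with weights $R_Z(z)$ to pass from the $L_1$ distance to a weighted $L_2$ quantity, Jensen's inequality to move $\mathbb{E}_F$ inside the square root, and the universal-2 property applied after summing over $s$ to bound the resulting second moment. The bookkeeping is also right: $\sum_{s,z}R_Z(z)=|{\cal S}|$ produces the $\sqrt{|{\cal S}|}$ factor, and the diagonal term produces $e^{-H_2(P_{XZ}|R_Z)}$.

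There is, however, one step that fails as written: the claim $\mathbb{E}_F[P_{SZ}(s,z)]=\sum_x P_{XZ}(x,z)\Pr[F(x)=s]=P_Z(z)/|{\cal S}|$. This requires $\Pr[F(x)=s]=1/|{\cal S}|$ for every $x$ and $s$, which is \emph{not} implied by the universal-2 definition; that definition only constrains the collision probabilities $\Pr[F(x)=F(x')]$ for $x\ne x'$. (For instance, a family consisting of a single injection ${\cal X}\to{\cal S}$ is universal-2 but maps each $x$ deterministically.) Without this uniformity, $\sum_{s,z}\bigl(P_{SZ}(s,z)-P_Z(z)/|{\cal S}|\bigr)^2/R_Z(z)$ is not a sum of variances: in general it equals variance plus squared bias, so bounding only the variance underestimates it. The repair is local and preserves exactly the cancellation you describe: do not interpret the quantity as a variance, but expand the square directly and use the \emph{deterministic} identity $\sum_s P_{SZ}(s,z)=P_Z(z)$, which holds for every realization of $F$ and for sub-normalized $P_{XZ}$. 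Then
\begin{align*}
\sum_s \mathbb{E}_F\Bigl[\bigl(P_{SZ}(s,z)-P_Z(z)/|{\cal S}|\bigr)^2\Bigr]
= \sum_s \mathbb{E}_F\bigl[P_{SZ}(s,z)^2\bigr] - \frac{P_Z(z)^2}{|{\cal S}|},
\end{align*}
and your second-moment bound $\sum_s \mathbb{E}_F[P_{SZ}(s,z)^2]\le \sum_x P_{XZ}(x,z)^2 + P_Z(z)^2/|{\cal S}|$ (diagonal terms plus universal-2 cross terms) yields precisely $\le\sum_x P_{XZ}(x,z)^2$; the rest of your argument then goes through unchanged. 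With this one modification the proof is correct for every universal-2 family, which is the generality the lemma claims.
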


Since $H_2(P_{XZ}|R_Z) \ge H_{\min}(P_{XZ}|R_Z)$, we have the following.
\begin{corollary}
For $P_{XZ} \in \bar{{\cal P}}({\cal X} \times {\cal Z})$ and
$R_Z \in {\cal P}({\cal Z})$, we have
\begin{eqnarray*}
\mathbb{E}_F[d(F|P_{XZ}) ] \le \frac{1}{2} \sqrt{|{\cal S}| e^{- H_{\min}(P_{XZ}|R_Z)}}.
\end{eqnarray*}
\end{corollary}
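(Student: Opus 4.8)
The plan is to obtain the corollary directly from the Leftover Hash Lemma (Lemma \ref{lemma:left-over}) by bounding the order-$2$ R\'enyi entropy below by the min-entropy. Since the right-hand side of Lemma \ref{lemma:left-over} is monotone decreasing in $H_2(P_{XZ}|R_Z)$, it suffices to establish the pointwise comparison
\begin{eqnarray*}
H_2(P_{XZ}|R_Z) \ge H_{\min}(P_{XZ}|R_Z),
\end{eqnarray*}
after which the claim follows by replacing $H_2$ with the smaller quantity $H_{\min}$ underneath the square root, which only loosens the bound.

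Unwinding the logarithms, this entropy inequality is equivalent to
\begin{eqnarray*}
\sum_{x \in {\cal X} \atop z \in \rom{supp}(R_Z)} \frac{P_{XZ}(x,z)^2}{R_Z(z)} \le \max_{x \in {\cal X} \atop z \in \rom{supp}(R_Z)} \frac{P_{XZ}(x,z)}{R_Z(z)}.
\end{eqnarray*}
First I would set $M := \max_{x,z} P_{XZ}(x,z)/R_Z(z)$, so that $P_{XZ}(x,z) \le M\, R_Z(z)$ holds for every pair $(x,z)$ with $z \in \rom{supp}(R_Z)$. Factoring one copy of $P_{XZ}(x,z)/R_Z(z)$ out of each summand and bounding it by $M$ then gives
\begin{eqnarray*}
\sum_{x,z} \frac{P_{XZ}(x,z)^2}{R_Z(z)} = \sum_{x,z} P_{XZ}(x,z) \cdot \frac{P_{XZ}(x,z)}{R_Z(z)} \le M \sum_{x,z} P_{XZ}(x,z) \le M,
\end{eqnarray*}
where the last inequality uses that $P_{XZ}$ is sub-normalized, so its total mass is at most $1$. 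This yields $e^{-H_2(P_{XZ}|R_Z)} \le e^{-H_{\min}(P_{XZ}|R_Z)}$, which is exactly the desired comparison.

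Combining this with Lemma \ref{lemma:left-over} and the monotonicity of $t \mapsto \frac{1}{2}\sqrt{|{\cal S}|\, t}$ closes the argument. I do not expect a genuine obstacle here; the only points requiring care are that the sum defining $H_2$ and the maximum defining $H_{\min}$ range over the \emph{same} index set $\{x \in {\cal X},\, z \in \rom{supp}(R_Z)\}$, and that the support condition $\rom{supp}(P_Z) \subset \rom{supp}(R_Z)$ from the footnote guarantees no mass of $P_{XZ}$ is discarded when the sum is restricted to $\rom{supp}(R_Z)$, so that $\sum_{x,z} P_{XZ}(x,z) \le 1$ remains valid.
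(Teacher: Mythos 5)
Your proposal is correct and follows essentially the same route as the paper: the paper derives the corollary from Lemma \ref{lemma:left-over} by simply citing the inequality $H_2(P_{XZ}|R_Z) \ge H_{\min}(P_{XZ}|R_Z)$, which is exactly the comparison you establish. The only difference is that you spell out the (standard) one-line proof of this inequality --- factoring out one copy of $P_{XZ}(x,z)/R_Z(z)$, bounding it by the maximum, and using sub-normalization --- which the paper leaves implicit.
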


Furthermore, since
\begin{eqnarray*}
d(P_{XZ}|f) \le 2 \varepsilon + d(\bar{P}_{XZ}|f)
\end{eqnarray*}
holds for $\bar{P}_{XZ} \in \bar{{\cal B}}^\varepsilon(P_{XZ})$ by the triangular inequality, we have the following.
\begin{corollary}
\label{corollary:smooth-entropy-bound}
For $P_{XZ} \in {\cal P}({\cal X} \times {\cal Z})$ and $R_Z \in {\cal P}({\cal Z})$, we have
\begin{eqnarray*}
\mathbb{E}_F[d(F|P_{XZ}) ] \le 2 \varepsilon + \frac{1}{2} \sqrt{|{\cal S}| e^{- \bar{H}_{\min}^\varepsilon(P_{XZ}|R_Z)}}.
\end{eqnarray*}
\end{corollary}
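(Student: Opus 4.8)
The plan is to combine the triangle-inequality bound displayed just above with the min-entropy form of the Leftover Hash Lemma (the preceding Corollary), and then to optimize over the choice of smoothed function. The key point is that the preceding Corollary is valid for \emph{every} sub-normalized $\bar{P}_{XZ} \in \bar{{\cal P}}({\cal X} \times {\cal Z})$, so we are free to apply it to any element of the sub-normalized ball $\bar{{\cal B}}^\varepsilon(P_{XZ})$.

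First I would fix an arbitrary $\bar{P}_{XZ} \in \bar{{\cal B}}^\varepsilon(P_{XZ})$ and, for each fixed hash function $f$, invoke the triangle inequality $d(f|P_{XZ}) \le 2\varepsilon + d(f|\bar{P}_{XZ})$. Taking the expectation over $F$ and using linearity gives
\begin{eqnarray*}
\mathbb{E}_F[d(F|P_{XZ})] \le 2\varepsilon + \mathbb{E}_F[d(F|\bar{P}_{XZ})].
\end{eqnarray*}
Next I would apply the preceding Corollary to the sub-normalized $\bar{P}_{XZ}$, obtaining
\begin{eqnarray*}
\mathbb{E}_F[d(F|\bar{P}_{XZ})] \le \frac{1}{2}\sqrt{|{\cal S}| \, e^{-H_{\min}(\bar{P}_{XZ}|R_Z)}}.
\end{eqnarray*}
Since $\bar{P}_{XZ}$ was arbitrary in $\bar{{\cal B}}^\varepsilon(P_{XZ})$, I would finally take it to be the maximizer in the definition of $\bar{H}_{\min}^\varepsilon(P_{XZ}|R_Z)$, so that $H_{\min}(\bar{P}_{XZ}|R_Z) = \bar{H}_{\min}^\varepsilon(P_{XZ}|R_Z)$; such a maximizer exists because $\bar{{\cal B}}^\varepsilon(P_{XZ})$ is compact and $H_{\min}(\cdot|R_Z)$ is continuous on it (otherwise one passes to the limit along a maximizing sequence). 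Substituting yields precisely the claimed bound.

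The only step that requires genuine verification is the triangle inequality itself, which the surrounding text asserts but leaves implicit. I would prove it by writing $d(f|P_{XZ}) = d(P_{SZ}, P_{\bar{S}} \times P_Z)$ and inserting the intermediate points $\bar{P}_{SZ}$ and $P_{\bar{S}} \times \bar{P}_Z$:
\begin{eqnarray*}
d(P_{SZ}, P_{\bar{S}} \times P_Z) \le d(P_{SZ}, \bar{P}_{SZ}) + d(\bar{P}_{SZ}, P_{\bar{S}} \times \bar{P}_Z) + d(P_{\bar{S}} \times \bar{P}_Z, P_{\bar{S}} \times P_Z).
\end{eqnarray*}
The middle term is exactly $d(f|\bar{P}_{XZ})$. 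The first term is at most $\varepsilon$ because the binning map $P_{XZ} \mapsto P_{SZ}$, which sums over each preimage $f^{-1}(s)$, is a contraction for $d(\cdot,\cdot)$ by the triangle inequality for absolute values. The third term equals $d(P_Z, \bar{P}_Z)$, since factoring out the common uniform $P_{\bar{S}}$ gives $\frac{1}{2}\sum_{s,z} P_{\bar{S}}(s)|P_Z(z) - \bar{P}_Z(z)| = d(P_Z, \bar{P}_Z)$, and this is in turn at most $\varepsilon$ because marginalization onto $Z$ is also a contraction. Hence the first and third terms contribute $2\varepsilon$ in total. The main obstacle is therefore not any sharp estimate but the bookkeeping that these two deterministic data-processing operations (binning to $S$ and marginalizing to $Z$) are non-increasing under $d(\cdot,\cdot)$, and crucially that they remain contractions for sub-normalized functions, which is exactly what legitimizes smoothing over the sub-normalized ball $\bar{{\cal B}}^\varepsilon(P_{XZ})$.
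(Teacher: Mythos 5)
Your proposal is correct and follows essentially the same route as the paper: the paper obtains the corollary by asserting the triangle-inequality bound $d(f|P_{XZ}) \le 2\varepsilon + d(f|\bar{P}_{XZ})$ for $\bar{P}_{XZ} \in \bar{{\cal B}}^\varepsilon(P_{XZ})$ and applying the min-entropy form of the leftover hash lemma to the optimally smoothed sub-normalized function. Your only addition is to make explicit the contraction properties of binning and marginalization that justify the $2\varepsilon$ term, which the paper leaves implicit.
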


The following is a key lemma to derive 
a upper bound on $\ell(P_{XZ},\varepsilon)$.
\begin{lemma}[Monotonicity]
\label{lemma:monotonicity}
For any function $f:{\cal X} \to {\cal S}$,
$P_{XZ} \in {\cal P}({\cal X} \times {\cal Z})$, and $R_Z \in {\cal P}({\cal Z})$, we have
\begin{eqnarray*}
H_{\min}^\varepsilon(P_{SZ}|R_Z) \le H_{\min}^\varepsilon(P_{XZ}|R_Z).
\end{eqnarray*}
\end{lemma}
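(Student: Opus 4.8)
The inequality is the smooth version of the elementary fact that coarse-graining $X\mapsto S=f(X)$ can only decrease the conditional min-entropy, so the plan is twofold: first verify the statement for the non-smooth quantity $H_{\min}$, and then promote it to $H_{\min}^\varepsilon$ by transporting an optimal smoothing of $P_{SZ}$ back to a smoothing of $P_{XZ}$.

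For the non-smooth step, note that for any $(x,z)$ with $s=f(x)$ we have $P_{SZ}(s,z)=\sum_{x'\in f^{-1}(s)}P_{XZ}(x',z)\ge P_{XZ}(x,z)$, so that $\max_{s,z}P_{SZ}(s,z)/R_Z(z)\ge\max_{x,z}P_{XZ}(x,z)/R_Z(z)$ and hence $H_{\min}(P_{SZ}|R_Z)\le H_{\min}(P_{XZ}|R_Z)$. The same pointwise domination holds with $P_{XZ}$ replaced by any non-negative $Q_{XZ}$, yielding $H_{\min}(f(Q_{XZ})|R_Z)\le H_{\min}(Q_{XZ}|R_Z)$, where I write $f(Q_{XZ})(s,z):=\sum_{x\in f^{-1}(s)}Q_{XZ}(x,z)$.

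For the smoothing step, let $Q_{SZ}\in{\cal B}^\varepsilon(P_{SZ})$ attain $H_{\min}^\varepsilon(P_{SZ}|R_Z)$. The idea is to lift it by the reverse-channel construction
\[
Q_{XZ}(x,z):=P_{XZ}(x,z)\,\frac{Q_{SZ}(f(x),z)}{P_{SZ}(f(x),z)}
\]
on atoms with $P_{SZ}(f(x),z)>0$, placing the residual mass of $Q_{SZ}$ sitting on atoms $(s,z)$ with $P_{SZ}(s,z)=0$ onto an arbitrary preimage in $f^{-1}(s)$. Summing over each fibre $f^{-1}(s)$ shows $f(Q_{XZ})=Q_{SZ}$, so $Q_{XZ}$ is normalized. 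Since $P_{XZ}(x,z)/P_{SZ}(f(x),z)\le 1$, one gets the pointwise bound $Q_{XZ}(x,z)\le Q_{SZ}(f(x),z)$, whence $H_{\min}(Q_{XZ}|R_Z)\ge H_{\min}(Q_{SZ}|R_Z)$ exactly as above. Finally, factoring $P_{XZ}(x,z)$ out of the absolute value and resumming over each fibre gives
\[
d(Q_{XZ},P_{XZ})=\frac{1}{2}\sum_{s,z}|Q_{SZ}(s,z)-P_{SZ}(s,z)|=d(Q_{SZ},P_{SZ})\le\varepsilon,
\]
so $Q_{XZ}\in{\cal B}^\varepsilon(P_{XZ})$. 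Combining these, $H_{\min}^\varepsilon(P_{XZ}|R_Z)\ge H_{\min}(Q_{XZ}|R_Z)\ge H_{\min}(Q_{SZ}|R_Z)=H_{\min}^\varepsilon(P_{SZ}|R_Z)$, which is the claim.

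The main obstacle is the lift: one needs a single $Q_{XZ}$ that simultaneously pushes forward to $Q_{SZ}$, preserves the maximal ratio against $R_Z$, and remains in the same $\varepsilon$-ball. The reverse-channel weighting is tailored so that the distance is reproduced \emph{exactly}, rather than merely controlled through the monotonicity $d(f(\cdot),f(\cdot))\le d(\cdot,\cdot)$; the only delicate point is the bookkeeping for atoms outside $\rom{supp}(P_{SZ})$, where the $0/0$ ratio must be replaced by an explicit placement of mass that leaves both the distance identity and the min-entropy estimate intact.
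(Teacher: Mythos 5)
Your proof is correct and takes essentially the same route as the paper's: the identical reverse-channel lift $Q_{XZ}(x,z) = P_{XZ}(x,z)\,Q_{SZ}(f(x),z)/P_{SZ}(f(x),z)$, the same exact distance identity $d(Q_{XZ},P_{XZ}) = d(Q_{SZ},P_{SZ})$, and the same pointwise domination $Q_{XZ}(x,z) \le Q_{SZ}(f(x),z)$ yielding the min-entropy comparison. You are in fact slightly more careful than the paper, which silently ignores atoms with $P_{SZ}(f(x),z)=0$; your explicit reassignment of the residual mass to a preimage fixes that corner (though note that both your argument and the paper's implicitly need every such atom $(s,z)$ to satisfy $f^{-1}(s)\neq\emptyset$, i.e.\ $f$ surjective, for the lifted distribution to remain normalized).
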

\begin{proof}
See Appendix \ref{appendix:lemma:monotonicity}.
\end{proof}

\begin{remark}
When Eve's side-information is the quantum density operator instead of
the random variable, the monotonicity of the smooth minimum entropy was
proved in \cite[Proposition 3]{tomamichel:12}, where the smoothing
is evaluated by the so-called purified distance instead of the trace distance.
For the quantum setting and the trace distance, it is not clear whether the monotonicity holds
or not because we cannot apply Uhlmann's theorem to the trace distance directly.
\end{remark}

From Corollary \ref{corollary:smooth-entropy-bound} and Lemma \ref{lemma:monotonicity},
we get the following lower and upper bounds on $\ell(P_{XZ},\varepsilon)$.
\begin{theorem}
\label{theorem:one-shot}
For any $0 < \eta \le \varepsilon$, we have
\begin{eqnarray*}
&& \hspace{-10mm} \max_{R_Z \in {\cal P}({\cal Z})} \bar{H}_{\min}^{(\varepsilon - \eta)/2}(P_{XZ}|R_Z) + \log 4\eta^2 - 1  \\
&\le& \ell(P_{XZ},\varepsilon) \\
&\le& H_{\min}^\varepsilon(P_{XZ}|P_Z).
\end{eqnarray*}
\end{theorem}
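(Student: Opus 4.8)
The plan is to prove the two inequalities separately: the lower bound is an achievability statement obtained from the leftover hash corollary, while the upper bound is a converse obtained from the monotonicity lemma together with a direct evaluation of the min-entropy at the ideal output distribution.

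For the lower bound, I would invoke Corollary~\ref{corollary:smooth-entropy-bound} with the smoothing parameter set to $(\varepsilon-\eta)/2$. Writing $\bar{H} := \bar{H}_{\min}^{(\varepsilon-\eta)/2}(P_{XZ}|R_Z)$, the corollary gives
\begin{eqnarray*}
\mathbb{E}_F[d(F|P_{XZ})] \le (\varepsilon - \eta) + \frac{1}{2}\sqrt{|{\cal S}|\, e^{-\bar{H}}}.
\end{eqnarray*}
The idea is to choose $|{\cal S}|$ as large as possible subject to the right-hand side not exceeding $\varepsilon$. Solving $\tfrac{1}{2}\sqrt{|{\cal S}|\, e^{-\bar{H}}} \le \eta$ yields the constraint $|{\cal S}| \le 4\eta^2 e^{\bar{H}}$. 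Picking $|{\cal S}|$ to be the largest admissible value of this form makes $\mathbb{E}_F[d(F|P_{XZ})] \le \varepsilon$, so by the averaging (probabilistic-method) argument there exists a deterministic hash function $f$ with $d(f|P_{XZ}) \le \varepsilon$, whence $\ell(P_{XZ},\varepsilon) \ge \log|{\cal S}|$. Taking logarithms of the constraint gives $\log|{\cal S}| \ge \bar{H} + \log 4\eta^2 - 1$, where the stray $-1$ absorbs the rounding of $|{\cal S}|$ to an admissible integer (using $\lfloor x\rfloor \ge x/2$ for $x \ge 1$ and $\log 2 < 1$). Maximizing over $R_Z$ finishes this direction.

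For the upper bound, fix any $f$ with $d(f|P_{XZ}) \le \varepsilon$ and set $S = f(X)$; note that marginalizing $P_{SZ}$ over $s$ returns $P_Z$, so the same choice $R_Z = P_Z$ may be used throughout. Since $d(P_{SZ}, P_{\bar{S}} \times P_Z) \le \varepsilon$, the normalized distribution $P_{\bar{S}}\times P_Z$ lies in the ball ${\cal B}^\varepsilon(P_{SZ})$, so by definition $H_{\min}^\varepsilon(P_{SZ}|P_Z) \ge H_{\min}(P_{\bar{S}}\times P_Z|P_Z)$. A direct computation gives $H_{\min}(P_{\bar{S}}\times P_Z|P_Z) = -\log \max_s P_{\bar{S}}(s) = \log|{\cal S}|$, because the $P_Z(z)$ factors cancel against $R_Z = P_Z$. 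Combining this with the monotonicity Lemma~\ref{lemma:monotonicity}, namely $H_{\min}^\varepsilon(P_{SZ}|P_Z) \le H_{\min}^\varepsilon(P_{XZ}|P_Z)$, we obtain $\log|{\cal S}| \le H_{\min}^\varepsilon(P_{XZ}|P_Z)$; taking the supremum over all admissible $f$ gives the claim.

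I do not expect a serious obstacle in either direction, since both rest on lemmas already in hand. The only delicate point is the bookkeeping in the lower bound: correctly splitting $\varepsilon$ into the smoothing budget $(\varepsilon-\eta)/2$, which the triangle inequality doubles back to $\varepsilon-\eta$, and the hashing budget $\eta$, and then justifying the $-1$ via the integrality of $|{\cal S}|$. The converse is essentially immediate once one observes that the ideal output $P_{\bar{S}}\times P_Z$ is itself the smoothing witness realizing $H_{\min}(\cdot|P_Z) = \log|{\cal S}|$.
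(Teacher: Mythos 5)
Your proof is correct and follows exactly the route the paper intends: the lower bound instantiates Corollary~\ref{corollary:smooth-entropy-bound} with smoothing parameter $(\varepsilon-\eta)/2$ and the constraint $\frac{1}{2}\sqrt{|{\cal S}|e^{-\bar H}}\le\eta$ (with the $-1$ absorbing integrality of $|{\cal S}|$), and the upper bound combines Lemma~\ref{lemma:monotonicity} with the observation that $P_{\bar S}\times P_Z\in{\cal B}^\varepsilon(P_{SZ})$ realizes $H_{\min}(\cdot|P_Z)=\log|{\cal S}|$. The paper leaves these details implicit, and your write-up fills them in faithfully.
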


\subsection{Information Spectrum Approach}
\label{subsection:information-spectrum}

In this section, we introduce the inf-spectral  entropy.
The quantity is used to calculate the lower and upper bounds 
in Theorem \ref{theorem:one-shot}.

\begin{definition}
For $P_{XZ} \in {\cal P}({\cal X} \times {\cal Z})$ and  $0 \le \varepsilon \le 1$, let 
\begin{eqnarray*}
\lefteqn{ H_\rom{s}^\varepsilon(P_{XZ}|R_Z) } \\
&:=& \sup\left\{ r : P_{XZ}\left\{ - \log \frac{P_{XZ}(x,z)}{R_Z(z)} \le r \right\} \le \varepsilon \right\}
\end{eqnarray*}
be the conditional inf-spectral entropy relative to $R_Z \in {\cal P}({\cal Z})$.
\end{definition}

The following two lemmas relate the quantities 
$H_{\min}^\varepsilon(P_{XZ}|R_Z)$ and $H_\rom{s}^{\varepsilon}(P_{XZ}|R_Z)$.
\begin{lemma}
\label{lemma:spectrum-direct-bound}
For $P_{XZ} \in {\cal P}({\cal X} \times {\cal Z})$ and $R_Z \in {\cal P}({\cal Z})$, we have
\begin{eqnarray*}
\bar{H}_{\min}^{\varepsilon/2}(P_{XZ}|R_Z) \ge H_\rom{s}^\varepsilon(P_{XZ}|R_Z).
\end{eqnarray*}
\end{lemma}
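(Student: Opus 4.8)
The plan is to prove the inequality by an explicit smoothing construction: I will exhibit a single sub-normalized $Q_{XZ} \in \bar{{\cal B}}^{\varepsilon/2}(P_{XZ})$ whose conditional min-entropy relative to $R_Z$ already reaches the target level, and then let that level increase to $H_\rom{s}^\varepsilon(P_{XZ}|R_Z)$. This is the standard information-spectrum-to-smooth-entropy conversion, specialized to the variational distance used here.

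First I would fix any $r < H_\rom{s}^\varepsilon(P_{XZ}|R_Z)$. By the very definition of the inf-spectral entropy, the atypical set $\mathcal{T}_r := \{(x,z) : -\log \frac{P_{XZ}(x,z)}{R_Z(z)} \le r\}$, equivalently $\{(x,z) : P_{XZ}(x,z) \ge e^{-r} R_Z(z)\}$, satisfies $P_{XZ}(\mathcal{T}_r) \le \varepsilon$. Next I would define the smoothed function by simply deleting the mass on this set: put $Q_{XZ}(x,z) = P_{XZ}(x,z)$ for $(x,z) \notin \mathcal{T}_r$ and $Q_{XZ}(x,z) = 0$ for $(x,z) \in \mathcal{T}_r$. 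This $Q_{XZ}$ is nonnegative and sub-normalized, so it lies in $\bar{{\cal P}}({\cal X}\times{\cal Z})$.

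The key step is to verify the two requirements on this $Q_{XZ}$. On one hand, $d(P_{XZ}, Q_{XZ}) = \frac{1}{2}\sum_{(x,z)\in \mathcal{T}_r} P_{XZ}(x,z) = \frac{1}{2} P_{XZ}(\mathcal{T}_r) \le \varepsilon/2$, so that $Q_{XZ} \in \bar{{\cal B}}^{\varepsilon/2}(P_{XZ})$; note that it is exactly the factor $1/2$ in the definition of the variational distance that converts the probability budget $\varepsilon$ into the smoothing radius $\varepsilon/2$ that appears on the left-hand side of the claim. On the other hand, every surviving point obeys $Q_{XZ}(x,z)/R_Z(z) = P_{XZ}(x,z)/R_Z(z) < e^{-r}$ while the deleted points contribute $0$, so the maximizing ratio in $H_{\min}(Q_{XZ}|R_Z)$ is at most $e^{-r}$, giving $H_{\min}(Q_{XZ}|R_Z) \ge r$.

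Combining these, $\bar{H}_{\min}^{\varepsilon/2}(P_{XZ}|R_Z) \ge H_{\min}(Q_{XZ}|R_Z) \ge r$, and letting $r \uparrow H_\rom{s}^\varepsilon(P_{XZ}|R_Z)$ establishes the lemma. I expect no serious obstacle in this argument; the only points demanding a little care are the treatment of the supremum (handled by the limiting argument just described) and the support convention $\rom{supp}(P_Z) \subset \rom{supp}(R_Z)$, which ensures the ratios are well defined on the relevant $(x,z)$ and that no surviving point has $R_Z(z) = 0$.
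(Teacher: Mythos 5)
Your proof is correct and takes essentially the same route as the paper's: both smooth $P_{XZ}$ by deleting its mass on the atypical set $\left\{ (x,z) : - \log \frac{P_{XZ}(x,z)}{R_Z(z)} \le r \right\}$, use the factor $\frac{1}{2}$ in the definition of $d(\cdot,\cdot)$ to convert the probability budget $\varepsilon$ into the smoothing radius $\varepsilon/2$, and read off $H_{\min}(\cdot|R_Z) \ge r$ from the surviving ratios. The only difference is that the paper sets $r$ equal to $H_\rom{s}^\varepsilon(P_{XZ}|R_Z)$ outright, whereas your choice of $r$ strictly below the supremum followed by a limiting argument is slightly more careful, since the supremum defining $H_\rom{s}^\varepsilon$ need not be attained.
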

\begin{proof}
See Appendix \ref{proof:lemma:spectrum-direct-bound}.
\end{proof}

\begin{lemma}
\label{lemma:spectrum-converse-bound}
For $P_{XZ} \in {\cal P}({\cal X} \times {\cal Z})$, we have
\begin{eqnarray*}
H_{\min}^{\varepsilon}(P_{XZ}|P_Z) \le H_\rom{s}^{\varepsilon+\zeta}(P_{XZ}|P_Z) - \log \zeta
\end{eqnarray*}
for any $0 < \zeta \le 1 - \varepsilon$.
\end{lemma}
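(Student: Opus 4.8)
The plan is to prove the bound by exhibiting an explicit optimizer for the smoothed min-entropy and testing it against the threshold that defines the inf-spectral entropy. Since the alphabets are finite, the set ${\cal B}^\varepsilon(P_{XZ})$ is compact and $H_{\min}(\cdot|P_Z)$ is continuous, so let $Q_{XZ} \in {\cal B}^\varepsilon(P_{XZ})$ attain $\lambda := H_{\min}^\varepsilon(P_{XZ}|P_Z) = H_{\min}(Q_{XZ}|P_Z)$. By the definition of $H_{\min}$ this means $Q_{XZ}(x,z) \le e^{-\lambda} P_Z(z)$ for every $(x,z)$, and by membership in ${\cal B}^\varepsilon(P_{XZ})$ we have $d(P_{XZ},Q_{XZ}) \le \varepsilon$ with $Q_{XZ}$ normalized.

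First I would unwind the claim into a statement about the inf-spectral entropy. Writing $r := \lambda + \log\zeta$, it suffices to show $H_{\rom{s}}^{\varepsilon+\zeta}(P_{XZ}|P_Z) \ge r$; by the supremum in the definition of $H_{\rom{s}}^{\varepsilon+\zeta}$, this follows once the threshold set
\begin{eqnarray*}
T := \left\{(x,z) : -\log\frac{P_{XZ}(x,z)}{P_Z(z)} \le r \right\}
\end{eqnarray*}
is shown to satisfy $P_{XZ}(T) \le \varepsilon + \zeta$, since then $r$ lies in the defining set. The definition of $T$ rearranges to $P_{XZ}(x,z) \ge e^{-\lambda}\zeta^{-1} P_Z(z)$ on $T$, equivalently $e^{-\lambda}P_Z(z) \le \zeta\, P_{XZ}(x,z)$. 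Combined with the pointwise min-entropy bound $Q_{XZ}(x,z) \le e^{-\lambda}P_Z(z)$, this yields the key inequality $Q_{XZ}(x,z) \le \zeta\, P_{XZ}(x,z)$ for all $(x,z) \in T$, and hence $Q_{XZ}(T) \le \zeta\, P_{XZ}(T)$.

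Next I would invoke the variational characterization of the distance, namely that for normalized $P_{XZ}, Q_{XZ}$ one has $P_{XZ}(T) - Q_{XZ}(T) \le d(P_{XZ}, Q_{XZ})$. Chaining this with $d(P_{XZ},Q_{XZ}) \le \varepsilon$ and the bound on $Q_{XZ}(T)$ gives $\varepsilon \ge P_{XZ}(T) - Q_{XZ}(T) \ge (1-\zeta)P_{XZ}(T)$, so that $P_{XZ}(T) \le \varepsilon/(1-\zeta)$. It then remains the elementary algebra $\varepsilon/(1-\zeta) \le \varepsilon+\zeta$, which rearranges to $0 \le \zeta(1-\varepsilon-\zeta)$ and therefore holds exactly under the hypothesis $\zeta \le 1-\varepsilon$. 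This delivers $P_{XZ}(T) \le \varepsilon+\zeta$ and completes the argument.

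There is no deep obstacle here; the two places to be careful are both soft steps. One must justify the variational inequality $P_{XZ}(T)-Q_{XZ}(T)\le d(P_{XZ},Q_{XZ})$ for the $\frac{1}{2}\sum_a|\cdot|$ normalization used in (\ref{eq:definition-of-distance}), which is the standard fact that this distance equals $\max_A(P_{XZ}(A)-Q_{XZ}(A))$ for normalized distributions. One should also confirm that $P_{XZ}(T)\le\varepsilon+\zeta$ genuinely forces $H_{\rom{s}}^{\varepsilon+\zeta}\ge r$, which it does because this inequality places $r$ directly in the set whose supremum defines $H_{\rom{s}}^{\varepsilon+\zeta}$. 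It is worth emphasizing that both the precise additive constant $-\log\zeta$ and the exact cutoff $\zeta\le 1-\varepsilon$ originate entirely from the single inequality $\varepsilon/(1-\zeta)\le\varepsilon+\zeta$.
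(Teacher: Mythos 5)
Your proof is correct and takes essentially the same route as the paper's: both select a distribution attaining $H_{\min}^{\varepsilon}(P_{XZ}|P_Z)$, examine the set where $P_{XZ}(x,z)/P_Z(z) \ge e^{-\lambda}/\zeta$ (the paper's ${\cal T}^c$, with $\delta = -\log\zeta$, is essentially your $T$), bound the optimizer's mass there (you: $Q_{XZ}(T) \le \zeta\, P_{XZ}(T)$ by pointwise domination; the paper: $\tilde{P}_{XZ}({\cal T}^c) \le \zeta$ by counting the at most $e^{r-\delta}$ points per $z$), and finish with the variational inequality $P_{XZ}(T) - Q_{XZ}(T) \le d(P_{XZ},Q_{XZ}) \le \varepsilon$. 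The only real difference is your final algebraic step: dividing by $1-\zeta$ makes the hypothesis $\zeta \le 1-\varepsilon$ appear essential and silently breaks down at the degenerate point $\zeta = 1$ (allowed when $\varepsilon = 0$), whereas chaining $P_{XZ}(T) \le \varepsilon + Q_{XZ}(T) \le \varepsilon + \zeta\, P_{XZ}(T) \le \varepsilon + \zeta$ recovers the paper's conclusion for every $0 < \zeta < 1$, the stated restriction on $\zeta$ serving only to keep the right-hand side nontrivial.
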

\begin{proof}
See Appendix \ref{proof:lemma:spectrum-converse-bound}.
\end{proof}

From Theorem \ref{theorem:one-shot}, 
Lemma \ref{lemma:spectrum-direct-bound} and 
Lemma \ref{lemma:spectrum-converse-bound}, we have the following.
\begin{theorem}
\label{theorem:spectrum}
For any $0 < \eta \le \varepsilon$ and $0 < \zeta \le 1 - \varepsilon$, we have
\begin{eqnarray*}
\lefteqn{ \max_{R_Z \in {\cal P}({\cal Z})} H_\rom{s}^{\varepsilon - \eta}(P_{XZ}|R_Z) + \log 4 \eta^2 - 1 } \\
&\le& \ell(P_{XZ},\varepsilon) \\
&\le& H_\rom{s}^{\varepsilon + \zeta}(P_{XZ}|P_Z) - \log \zeta.
\end{eqnarray*}
\end{theorem}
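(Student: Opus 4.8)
The plan is to obtain each side of the claimed sandwich by chaining the one-shot bounds of Theorem \ref{theorem:one-shot} with the two comparison lemmas, replacing the smooth min-entropy quantities appearing there by inf-spectral entropies. The entire argument is a substitution of smoothing parameters, and the only point requiring care is that these parameters line up correctly.

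For the lower bound, I would start from the achievability part of Theorem \ref{theorem:one-shot}, namely $\ell(P_{XZ},\varepsilon) \ge \max_{R_Z \in {\cal P}({\cal Z})} \bar{H}_{\min}^{(\varepsilon-\eta)/2}(P_{XZ}|R_Z) + \log 4\eta^2 - 1$. Lemma \ref{lemma:spectrum-direct-bound}, applied with smoothing parameter $\varepsilon-\eta$ in place of $\varepsilon$, gives $\bar{H}_{\min}^{(\varepsilon-\eta)/2}(P_{XZ}|R_Z) \ge H_\rom{s}^{\varepsilon-\eta}(P_{XZ}|R_Z)$ for every fixed $R_Z$. Since this holds for all $R_Z$, taking the maximum over $R_Z$ on both sides preserves the inequality; adding the constant $\log 4\eta^2 - 1$ and chaining with the displayed bound yields $\ell(P_{XZ},\varepsilon) \ge \max_{R_Z \in {\cal P}({\cal Z})} H_\rom{s}^{\varepsilon-\eta}(P_{XZ}|R_Z) + \log 4\eta^2 - 1$, which is the desired lower bound.

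For the upper bound, I would start from the converse part of Theorem \ref{theorem:one-shot}, namely $\ell(P_{XZ},\varepsilon) \le H_{\min}^\varepsilon(P_{XZ}|P_Z)$, and then invoke Lemma \ref{lemma:spectrum-converse-bound} with the reference distribution taken to be $P_Z$ itself, which bounds $H_{\min}^\varepsilon(P_{XZ}|P_Z) \le H_\rom{s}^{\varepsilon+\zeta}(P_{XZ}|P_Z) - \log\zeta$ for any $0 < \zeta \le 1-\varepsilon$. Transitivity of $\le$ then gives $\ell(P_{XZ},\varepsilon) \le H_\rom{s}^{\varepsilon+\zeta}(P_{XZ}|P_Z) - \log\zeta$.

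The only genuine subtlety — if it can even be called an obstacle — is the factor of $2$ in the smoothing radius: the achievability bound of Theorem \ref{theorem:one-shot} carries the half-smoothed quantity $\bar{H}_{\min}^{(\varepsilon-\eta)/2}$, and Lemma \ref{lemma:spectrum-direct-bound} is stated precisely so that the half radius on the smooth min-entropy matches the full radius on the inf-spectral entropy, letting the two fit together without loss. I would also verify that the hypotheses $0 < \eta \le \varepsilon$ and $0 < \zeta \le 1-\varepsilon$ are exactly the ranges needed to apply Theorem \ref{theorem:one-shot} and Lemma \ref{lemma:spectrum-converse-bound} respectively, which they are, so no additional constraints are introduced.
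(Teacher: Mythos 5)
Your proof is correct and is exactly the paper's argument: the paper derives Theorem \ref{theorem:spectrum} by chaining the two sides of Theorem \ref{theorem:one-shot} with Lemma \ref{lemma:spectrum-direct-bound} (applied at smoothing parameter $\varepsilon-\eta$) and Lemma \ref{lemma:spectrum-converse-bound} (at reference $P_Z$), precisely as you do. Your parameter bookkeeping, including the matching of the half-radius $(\varepsilon-\eta)/2$ in $\bar{H}_{\min}$ to the full radius $\varepsilon-\eta$ in $H_{\rom{s}}$, is also right.
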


\subsection{Gaussian Approximation}
\label{subsection:gaussian}

In this section, we consider the asymptotic setting.
By applying the Berry-Ess\'een theorem to Theorem \ref{theorem:spectrum},
we have the following Gaussian approximation of $\ell(P^n_{XZ},\varepsilon)$.
\begin{theorem}
\label{theorem:gaussian-approximation}
Let 
\begin{eqnarray*}
V(X|Z) := \sum_{x,z} P_{XZ}(x,z)\left( \log \frac{1}{P_{X|Z}(x|z)} - H(X|Z) \right)^2
\end{eqnarray*}
be the dispersion of the conditional log likelihood.
Then, we have
\begin{eqnarray*}
\ell(P^n_{XZ}, \varepsilon) = n H(X|Z) + \sqrt{n V(X|Z)} \Phi^{-1}(\varepsilon) + O(\log n),
\end{eqnarray*}
where $\Phi(\cdot)$ is the cumulative distribution function of the standard Gaussian
random variable.
\end{theorem}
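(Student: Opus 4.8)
The plan is to derive Theorem \ref{theorem:gaussian-approximation} by applying the Berry-Ess\'een theorem to both the lower and upper bounds furnished by Theorem \ref{theorem:spectrum}, with the free parameters $\eta$ and $\zeta$ tuned so that the two sides agree up to $O(\log n)$. The central object is the conditional inf-spectral entropy $H_\rom{s}^{\alpha}(P^n_{XZ}|R_Z^n)$, which by definition is the largest threshold $r$ such that the tail probability $P^n_{XZ}\{ -\log \frac{P^n_{XZ}(x,z)}{R_Z^n(z)} \le r \} \le \alpha$. For the i.i.d.\ product source, the relevant log-likelihood decomposes as a sum of $n$ i.i.d.\ terms, so I would first establish that $H_\rom{s}^{\alpha}$ is, up to lower-order corrections, the inverse of the cumulative distribution function of this sum. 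Concretely, with the choice $R_Z = P_Z$, the random variable $-\log \frac{P_{X|Z}(x|z)}{1}$ summed over $n$ coordinates has mean $n H(X|Z)$ and variance $n V(X|Z)$, and the Berry-Ess\'een theorem tells us that its CDF is within $c/\sqrt{n}$ of the Gaussian CDF $\Phi\big(\frac{r - nH(X|Z)}{\sqrt{nV(X|Z)}}\big)$.

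The key steps, in order, are as follows. First I would fix $R_Z = P_Z$ in the lower bound of Theorem \ref{theorem:spectrum} (since the maximum over $R_Z$ can only help) and evaluate $H_\rom{s}^{\varepsilon - \eta}(P^n_{XZ}|P^n_Z)$ using Berry-Ess\'een: solving $\Phi\big(\frac{r - nH(X|Z)}{\sqrt{nV(X|Z)}}\big) \approx \varepsilon - \eta$ for the threshold $r$ yields $r = nH(X|Z) + \sqrt{nV(X|Z)}\,\Phi^{-1}(\varepsilon - \eta) + O(1)$, where the $O(1)$ absorbs the $c/\sqrt{n}$ Berry-Ess\'een error after passing through $\Phi^{-1}$. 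Second, I would choose the smoothing parameter $\eta$ to decay polynomially in $n$, say $\eta = 1/\sqrt{n}$; then the additive term $\log 4\eta^2 - 1 = -\log n + O(1)$ contributes only $O(\log n)$, and the Taylor expansion of $\Phi^{-1}$ gives $\Phi^{-1}(\varepsilon - \eta) = \Phi^{-1}(\varepsilon) + O(\eta) = \Phi^{-1}(\varepsilon) + O(1/\sqrt n)$, whose effect on $r$ is $O(1)$. This establishes the lower bound $\ell(P^n_{XZ},\varepsilon) \ge nH(X|Z) + \sqrt{nV(X|Z)}\,\Phi^{-1}(\varepsilon) + O(\log n)$. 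Third, I would repeat the analysis for the upper bound $H_\rom{s}^{\varepsilon + \zeta}(P^n_{XZ}|P^n_Z) - \log \zeta$, again choosing $\zeta = 1/\sqrt{n}$ so that $-\log\zeta = \frac{1}{2}\log n = O(\log n)$ and $\Phi^{-1}(\varepsilon + \zeta) = \Phi^{-1}(\varepsilon) + O(1/\sqrt n)$, matching the lower bound to the claimed order.

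The main obstacle I anticipate is controlling the interaction between the Berry-Ess\'een error and the inversion of $\Phi$ near the tails, together with verifying that the third absolute moment of the log-likelihood (required for Berry-Ess\'een) is finite. Since $\varepsilon$ is a fixed constant strictly inside $(0,1)$ and the perturbations $\eta,\zeta$ are $O(1/\sqrt n)$, the argument $\varepsilon \pm \eta$ stays bounded away from $0$ and $1$ for large $n$, so $\Phi^{-1}$ has bounded derivative there and the inversion is Lipschitz; this keeps the propagated error at $O(1)$. The finiteness of the third moment follows from the assumption that $P_{XZ}$ has finite support (so the log-likelihood is bounded), which makes the Berry-Ess\'een constant $c$ finite and uniform. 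The only genuine care needed is to ensure the two choices $\eta = \zeta = 1/\sqrt n$ produce matching leading and second-order terms while relegating everything else to the $O(\log n)$ slack; once the Berry-Ess\'een substitution is justified, the remaining manipulations are routine Taylor expansions.
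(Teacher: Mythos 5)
Your proposal is correct and takes essentially the same route as the paper: the paper states Theorem \ref{theorem:gaussian-approximation} with no detailed proof, noting only that it follows ``by applying the Berry-Ess\'een theorem to Theorem \ref{theorem:spectrum},'' which is precisely what you carry out. Your choices $R_Z = P_Z^n$ and $\eta = \zeta = 1/\sqrt{n}$, together with the Lipschitz control of $\Phi^{-1}$ away from $\{0,1\}$, correctly fill in the details of that sketch so that the smoothing terms $\log 4\eta^2 - 1$ and $-\log\zeta$ are absorbed into the $O(\log n)$ remainder.
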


\subsection{Exponential Bound}

In this section, we review the exponential bounds.

\begin{definition}
For $P_{XZ} \in {\cal P}({\cal X} \times {\cal Z})$, let
\begin{eqnarray*}
\phi(\rho|P_{XZ}) = \log \sum_z P_Z(z) \left( \sum_x P_{X|Z}(x|z)^{\frac{1}{1-\rho}}\right)^{1-\rho}.
\end{eqnarray*}
\end{definition}
We have the following.
\begin{theorem}[\cite{hayashi:10b}]
\label{theorem:large-deviation-bound}
For any $0 < \rho \le \frac{1}{2}$, we have
\begin{eqnarray*}
\mathbb{E}_F[ d(F|P_{XZ})] \le \frac{3}{2} |{\cal S}|^\rho e^{\phi(\rho|P_{XZ})}.
\end{eqnarray*}
\end{theorem}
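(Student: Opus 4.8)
The plan is to realize the large-deviation strategy announced in the introduction: split $\mathbb{E}_F[d(F|P_{XZ})]$ into a type 1 (smoothing) term and a type 2 (leftover-hash) term, and then choose the smoothing threshold so that both terms collapse onto $|{\cal S}|^\rho e^{\phi(\rho|P_{XZ})}$. Write $M = |{\cal S}|$, $\alpha = \frac{1}{1-\rho}$ (so that $\alpha \in (1,2]$ precisely when $\rho \in (0,\frac12]$), and let $G(z) := \sum_x P_{X|Z}(x|z)^{1/(1-\rho)}$, so that $e^{\phi(\rho|P_{XZ})} = \sum_z P_Z(z) G(z)^{1-\rho}$. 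I would pick, for each $z$, the threshold $\tau(z) := (G(z)/M)^{1-\rho}$ and define the sub-normalized $\bar P_{XZ}(x,z) := P_{XZ}(x,z)\,\mathbf{1}[P_{X|Z}(x|z) \le \tau(z)]$. By the triangle inequality already recorded just before Corollary \ref{corollary:smooth-entropy-bound}, $d(F|P_{XZ}) \le 2\varepsilon + d(F|\bar P_{XZ})$ with $\varepsilon = d(P_{XZ},\bar P_{XZ})$, which reduces everything to bounding these two pieces.

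For the type 1 term, $2\varepsilon = \sum_{x,z:\,P_{X|Z}(x|z) > \tau(z)} P_{XZ}(x,z)$; on the discarded set I would use the Chernoff-type estimate $P_{XZ}(x,z) = P_Z(z)P_{X|Z}(x|z) \le P_Z(z)\tau(z)^{1-\alpha} P_{X|Z}(x|z)^{\alpha}$, which gives $2\varepsilon \le \sum_z P_Z(z)\tau(z)^{1-\alpha} G(z)$, and the choice of $\tau(z)$ makes this equal to $M^\rho e^{\phi(\rho|P_{XZ})}$. For the type 2 term I would apply the leftover-hash estimate of Lemma \ref{lemma:left-over} conditionally on each $z$ (that is, run its Cauchy--Schwarz proof on each conditional law, producing a square root inside the sum over $z$), giving $\mathbb{E}_F[d(F|\bar P_{XZ})] \le \frac12 \sum_z P_Z(z)\sqrt{M \sum_{x:\,P_{X|Z}(x|z)\le\tau(z)} P_{X|Z}(x|z)^2}$. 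Here is where the hypothesis $\rho \le \frac12$ is essential: since then $2-\alpha \ge 0$, on the retained set $P_{X|Z}(x|z)^2 = P_{X|Z}(x|z)^{\alpha}P_{X|Z}(x|z)^{2-\alpha} \le \tau(z)^{2-\alpha} P_{X|Z}(x|z)^{\alpha}$, so the inner sum is at most $\tau(z)^{2-\alpha} G(z)$, and the same $\tau(z)$ turns the type 2 term into exactly $\frac12 M^\rho e^{\phi(\rho|P_{XZ})}$. Adding the two contributions yields $\frac32 M^\rho e^{\phi(\rho|P_{XZ})}$.

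The main obstacle, and the step I would check most carefully, is the type 2 estimate: one must apply leftover hashing per conditional-$z$, with the square root inside the $z$-sum, rather than invoking Lemma \ref{lemma:left-over} globally, since a Cauchy--Schwarz across $z$ would replace $\sum_z P_Z(z) G(z)^{1-\rho}$ by the strictly larger $(\sum_z P_Z(z)G(z)^{2(1-\rho)})^{1/2}$ and destroy the exact constant. This requires re-running the argument of Lemma \ref{lemma:left-over} at the level of each conditional distribution, which relies only on the universal 2 (pairwise-independence) property through the elementary variance bound $\mathbb{E}_F[\sum_s(\bar P_{SZ}(s,z) - \bar P_Z(z)/M)^2] \le \sum_x \bar P_{XZ}(x,z)^2$. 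Once the per-$z$ form is in hand, the remaining work is the bookkeeping that both terms land on the common quantity $M^\rho e^{\phi(\rho|P_{XZ})}$ under $\tau(z) = (G(z)/M)^{1-\rho}$, and the observation that $\rho \le \frac12$ is exactly the condition making the truncation exponent $2-\alpha$ nonnegative so that the cap $P_{X|Z} \le \tau(z)$ can be used in the second-moment term.
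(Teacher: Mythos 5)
Your proof is correct, and it supplies something the paper itself does not contain: Theorem \ref{theorem:large-deviation-bound} is imported from \cite{hayashi:10b} without proof, so the closest in-paper relative is the proof of the hybrid bound (Theorem \ref{theorem:hybrid-bound-reference-system}). Your route shares that proof's skeleton --- truncation smoothing, the triangle inequality recorded before Corollary \ref{corollary:smooth-entropy-bound}, then leftover hashing --- but adds exactly the two refinements that upgrade the exponent from the weaker $H_{1+\theta}(P_{XZ}|P_Z)$ form of Corollary \ref{corollary:large-deviation-bound} to the Gallager-type $\phi(\rho|P_{XZ})$: (i) a $z$-dependent threshold $\tau(z)=(G(z)/|{\cal S}|)^{1-\rho}$ matched to the conditional R\'enyi weight $G(z)$, where Theorem \ref{theorem:hybrid-bound-reference-system} uses a single threshold $r$; and (ii) keeping the square root inside the sum over $z$. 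The arithmetic checks out: $\tau(z)^{1-\alpha}G(z)=|{\cal S}|^{\rho}G(z)^{1-\rho}$ turns the discarded mass into $|{\cal S}|^{\rho}e^{\phi(\rho|P_{XZ})}$, while $|{\cal S}|\,\tau(z)^{2-\alpha}G(z)=\bigl(|{\cal S}|^{\rho}G(z)^{1-\rho}\bigr)^{2}$ makes the hashing term $\tfrac12|{\cal S}|^{\rho}e^{\phi(\rho|P_{XZ})}$, and $\rho\le\tfrac12$ enters only through $2-\alpha\ge 0$, as you identify. One simplification: you do not actually need to re-run the proof of Lemma \ref{lemma:left-over} per conditional law. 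Since that lemma is stated for sub-normalized inputs, you may apply it verbatim, for each fixed $z$, to the sub-normalized function $\bar{P}_{XZ}(\cdot,z)$ on ${\cal X}$ with trivial side information (singleton ${\cal Z}$, point-mass reference); summing the per-$z$ bounds by linearity of $\mathbb{E}_F$ gives precisely your $\tfrac12\sum_z P_Z(z)\sqrt{|{\cal S}|\sum_{x}P_{X|Z}(x|z)^2\,\mathbf{1}[P_{X|Z}(x|z)\le\tau(z)]}$. Equivalently, your per-$z$ step is a single application of Lemma \ref{lemma:left-over} with reference $R_Z(z)\propto P_Z(z)\bigl(\sum_{x:P_{X|Z}(x|z)\le\tau(z)}P_{X|Z}(x|z)^2\bigr)^{1/2}$, in the spirit of the optimal choice in Eq.~(\ref{eq:optimal-R}); and your warning that using $R_Z=P_Z$ globally would replace $\sum_z P_Z(z)G(z)^{1-\rho}$ by $\bigl(\sum_z P_Z(z)G(z)^{2(1-\rho)}\bigr)^{1/2}$ and lose the constant is also right.
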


\begin{definition}
For $\theta > 0$, $P_{XZ} \in {\cal P}({\cal X} \times {\cal Z})$, 
and $R_Z \in {\cal P}({\cal Z})$, let
\begin{eqnarray*}
H_{1+\theta}(P_{XZ}|R_Z) := - \frac{1}{\theta} \log \sum_{x,z} R_Z(z) \left( \frac{P_{XZ}(x,z)}{R_Z(z)} \right)^{1+\theta}
\end{eqnarray*}
be the conditional R\'enyi entropy of order $1+\theta$ relative to $R_Z$.
For $\theta = 0$, we define 
\begin{eqnarray*}
H_1(P_{XZ}|R_Z) &:=& \lim_{\theta \to 0} H_{1+\theta}(P_{XZ}|R_Z) \\
&=& H(X|Z) - D(P_Z \| R_Z).
\end{eqnarray*}
\end{definition}

By using Jensen's inequality and by setting $\rho = \frac{\theta}{1+\theta}$, we have
\begin{eqnarray*}
\phi(\rho|P_{XZ}) \le - \frac{\theta}{1+\theta} H_{1+\theta}(P_{XZ}|P_Z).
\end{eqnarray*}
Thus, we have the following slightly looser bound.
\begin{corollary}
\label{corollary:large-deviation-bound}
For $0 < \theta \le 1$, we have
\begin{eqnarray*}
\mathbb{E}_F[ d(F|P_{XZ})] \le \frac{3}{2} |{\cal S}|^{\frac{\theta}{1+\theta}} e^{- \frac{\theta}{1+\theta} H_{1+\theta}(P_{XZ}|P_Z)}.
\end{eqnarray*}
\end{corollary}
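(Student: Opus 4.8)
The plan is to deduce the corollary directly from Theorem \ref{theorem:large-deviation-bound} by bounding $\phi(\rho|P_{XZ})$ in terms of $H_{1+\theta}(P_{XZ}|P_Z)$ under the substitution $\rho = \frac{\theta}{1+\theta}$. First I would check that the ranges match: for $0 < \theta \le 1$ the map $\theta \mapsto \frac{\theta}{1+\theta}$ is increasing and takes values in $(0,\frac{1}{2}]$, so the hypothesis $0 < \rho \le \frac{1}{2}$ of Theorem \ref{theorem:large-deviation-bound} is satisfied and that bound applies. Next I would rewrite $\phi(\rho|P_{XZ})$ at this $\rho$, using $\frac{1}{1-\rho} = 1+\theta$ and $1-\rho = \frac{1}{1+\theta}$, to obtain
\begin{eqnarray*}
\phi(\rho|P_{XZ}) &=& \log \sum_z P_Z(z) \left( \sum_x P_{X|Z}(x|z)^{1+\theta} \right)^{\frac{1}{1+\theta}}.
\end{eqnarray*}
On the other hand, substituting $\frac{P_{XZ}(x,z)}{P_Z(z)} = P_{X|Z}(x|z)$ into the definition of $H_{1+\theta}(P_{XZ}|P_Z)$ gives
\begin{eqnarray*}
- \frac{\theta}{1+\theta} H_{1+\theta}(P_{XZ}|P_Z) &=& \frac{1}{1+\theta} \log \sum_z P_Z(z) \sum_x P_{X|Z}(x|z)^{1+\theta}.
\end{eqnarray*}

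The crux is therefore the inequality
\begin{eqnarray*}
\lefteqn{ \log \sum_z P_Z(z) \left( \sum_x P_{X|Z}(x|z)^{1+\theta} \right)^{\frac{1}{1+\theta}} } \\
&\le& \frac{1}{1+\theta} \log \sum_z P_Z(z) \sum_x P_{X|Z}(x|z)^{1+\theta}.
\end{eqnarray*}
Setting $A(z) := \sum_x P_{X|Z}(x|z)^{1+\theta}$ and $p := \frac{1}{1+\theta}$, exponentiating reduces this to $\sum_z P_Z(z) A(z)^p \le \left( \sum_z P_Z(z) A(z) \right)^p$, which is exactly Jensen's inequality for the concave function $t \mapsto t^p$ (concave since $0 < p < 1$ for $\theta > 0$) applied with the probability weights $P_Z(z)$. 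This is the ``Jensen's inequality'' step referenced in the text, and it is the only analytic input; everything else is bookkeeping of the exponents $\rho$, $1-\rho$, and $1+\theta$.

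Combining the two displayed identities with this inequality yields $\phi(\rho|P_{XZ}) \le - \frac{\theta}{1+\theta} H_{1+\theta}(P_{XZ}|P_Z)$, and plugging this into Theorem \ref{theorem:large-deviation-bound} together with $|{\cal S}|^\rho = |{\cal S}|^{\frac{\theta}{1+\theta}}$ gives the claimed bound. I expect no serious obstacle here: the argument is a one-parameter specialization followed by a single application of concavity, so the main risk is merely a slip in tracking the reciprocal exponents through the two definitions rather than any genuine difficulty.
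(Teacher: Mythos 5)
Your proposal is correct and matches the paper's own derivation: the paper likewise sets $\rho = \frac{\theta}{1+\theta}$ and invokes Jensen's inequality (concavity of $t \mapsto t^{1/(1+\theta)}$) to establish $\phi(\rho|P_{XZ}) \le - \frac{\theta}{1+\theta} H_{1+\theta}(P_{XZ}|P_Z)$, then substitutes into Theorem \ref{theorem:large-deviation-bound}. Your write-up simply fills in the bookkeeping that the paper leaves implicit, and it does so accurately, including the range check $\rho \in (0,\frac{1}{2}]$.
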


From Theorem \ref{theorem:large-deviation-bound} and 
Corollary \ref{corollary:large-deviation-bound}, we have the following.
\begin{theorem}
\label{theorem:key-length-large-deviation-bound}
We have
\begin{eqnarray}
\lefteqn{ \ell(P_{XZ}, \varepsilon)  } \nonumber \\
&\ge&  \sup_{0 < \rho \le \frac{1}{2}} \frac{- \phi(\rho|P_{XZ}) + \log (2 \varepsilon / 3) }{\rho} - 1 
\label{eq:large-deviation-bound-1} \\
&\ge& \sup_{0 < \theta \le 1} \frac{\theta H_{1+\theta}(P_{XZ}|P_Z) + (1+\theta) \log (2 \varepsilon /3) }{\theta} - 1.
\label{eq:large-deviation-bound-2}
\end{eqnarray}
\end{theorem}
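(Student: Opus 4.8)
The plan is to combine the averaged-security upper bound of Theorem \ref{theorem:large-deviation-bound} with the standard expectation-to-existence (nonconstructive selection) argument. Since $\mathbb{E}_F[d(F|P_{XZ})]$ is the average of $d(f|P_{XZ})$ over the uniform choice of $f$ from the universal-2 family ${\cal F}$, whenever the right-hand side of Theorem \ref{theorem:large-deviation-bound} does not exceed $\varepsilon$ there must exist at least one deterministic $f \in {\cal F}$ with $d(f|P_{XZ}) \le \varepsilon$. Consequently, any cardinality $|{\cal S}|$ for which $\frac{3}{2}|{\cal S}|^\rho e^{\phi(\rho|P_{XZ})} \le \varepsilon$ holds for some $0 < \rho \le \frac{1}{2}$ is admissible in the definition of $\ell(P_{XZ},\varepsilon)$, so that $\log|{\cal S}| \le \ell(P_{XZ},\varepsilon)$.

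Next I would solve this inequality for $\log|{\cal S}|$. Taking logarithms, $\frac{3}{2}|{\cal S}|^\rho e^{\phi(\rho|P_{XZ})} \le \varepsilon$ is equivalent to $\log|{\cal S}| \le \frac{-\phi(\rho|P_{XZ}) + \log(2\varepsilon/3)}{\rho}$. Because $|{\cal S}|$ must be a positive integer while the right-hand side is real, I would take $|{\cal S}|$ to be the largest integer satisfying the bound; since consecutive achievable values of $\log|{\cal S}|$ differ by at most $\log(1+1/m) \le \log 2 < 1$, this rounding costs at most $1$ in the key length. Hence $\ell(P_{XZ},\varepsilon) \ge \frac{-\phi(\rho|P_{XZ}) + \log(2\varepsilon/3)}{\rho} - 1$ for every such $\rho$, and taking the supremum over $0 < \rho \le \frac{1}{2}$ yields (\ref{eq:large-deviation-bound-1}).

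For the second inequality (\ref{eq:large-deviation-bound-2}), I would invoke the Jensen estimate stated just before Corollary \ref{corollary:large-deviation-bound}, namely $\phi(\rho|P_{XZ}) \le -\frac{\theta}{1+\theta} H_{1+\theta}(P_{XZ}|P_Z)$ under the substitution $\rho = \frac{\theta}{1+\theta}$. This substitution is a bijection from $\theta \in (0,1]$ onto $\rho \in (0,\frac{1}{2}]$, so restricting the supremum in (\ref{eq:large-deviation-bound-1}) to this curve can only decrease it. Dividing $-\phi(\rho|P_{XZ}) \ge \frac{\theta}{1+\theta} H_{1+\theta}(P_{XZ}|P_Z)$ by $\rho > 0$ gives $-\phi(\rho|P_{XZ})/\rho \ge H_{1+\theta}(P_{XZ}|P_Z)$, while the identity $1/\rho = (1+\theta)/\theta$ rewrites the second term; together these produce exactly $\frac{\theta H_{1+\theta}(P_{XZ}|P_Z) + (1+\theta)\log(2\varepsilon/3)}{\theta} - 1$, and the supremum over $0 < \theta \le 1$ gives (\ref{eq:large-deviation-bound-2}). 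Equivalently, one may run the identical expectation argument directly on the looser Corollary \ref{corollary:large-deviation-bound}.

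The only genuinely delicate point is the integer-rounding step producing the additive $-1$: one must check that the continuous bound is nonnegative, so that an admissible integer $|{\cal S}| \ge 1$ exists, and that the granularity of achievable key lengths is no coarser than $\log 2$, which holds for the relevant range of $|{\cal S}|$. Everything else is routine algebra, and the expectation-to-existence principle is the standard selection argument, so I do not expect substantive difficulty beyond bookkeeping of the parameter ranges.
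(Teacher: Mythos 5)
Your proposal is correct and matches the paper's intended argument: the paper derives this theorem directly from Theorem \ref{theorem:large-deviation-bound} and Corollary \ref{corollary:large-deviation-bound} via exactly the expectation-to-existence selection, solving $\frac{3}{2}|{\cal S}|^\rho e^{\phi(\rho|P_{XZ})} \le \varepsilon$ for $\log|{\cal S}|$ with integer rounding giving the additive $-1$, and the substitution $\rho = \frac{\theta}{1+\theta}$ with the Jensen estimate giving the second inequality. Your bookkeeping of the rounding step and the bijection between $\theta \in (0,1]$ and $\rho \in (0,\frac{1}{2}]$ is sound, so there is nothing to correct.
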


\section{Hybrid Bound}
\label{section:hybrid}

In this section, we derive another bound from the leftover hash lemma
(Lemma \ref{lemma:left-over}). A basic idea is to use the smoothing in 
a similar manner as in the derivation of Theorem \ref{theorem:large-deviation-bound}.
However, we do not use the large deviation bound.

\begin{theorem}
\label{theorem:hybrid-bound-reference-system}
For any $0 < \eta \le \varepsilon$, we have
\begin{eqnarray}
\lefteqn{ \ell(P_{XZ},\varepsilon) } \nonumber \\
&\ge& \max_{0 \le \theta \le 1} \max_{R_Z} [ \theta H_{1+\theta}(P_{XZ}|R_Z)  \nonumber \\
&& ~~~~+ (1- \theta) H_\rom{s}^{\varepsilon - \eta}(P_{XZ}|R_Z) ] 
	+ \log 4 \eta^2 - 1.
	\label{eq:hybrid-bound}
\end{eqnarray}
\end{theorem}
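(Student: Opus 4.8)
The plan is to reuse the smoothing-plus-leftover-hash strategy behind the exponential bound (Theorem~\ref{theorem:large-deviation-bound}), but to replace the large deviation estimate with a direct H\"older-type splitting that cleanly separates an inf-spectral contribution from a R\'enyi contribution. Throughout I fix an admissible reference distribution $R_Z$, a parameter $\theta \in [0,1]$, and a real number $r < H_\rom{s}^{\varepsilon - \eta}(P_{XZ}|R_Z)$; at the very end I pass to the supremum over $r$ and the maxima over $R_Z$ and $\theta$.

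First I would build a smoothed (sub-normalized) function by truncating the high-likelihood part, setting $\bar{P}_{XZ}(x,z) := P_{XZ}(x,z)\,\mathbf{1}[\,P_{XZ}(x,z)/R_Z(z) < e^{-r}\,]$. Since $\bar{P}_{XZ} \le P_{XZ}$ pointwise, the total removed mass equals $P_{XZ}\{-\log(P_{XZ}/R_Z) \le r\}$, which is at most $\varepsilon - \eta$ by the definition of the inf-spectral entropy for $r < H_\rom{s}^{\varepsilon-\eta}(P_{XZ}|R_Z)$. Hence $d(P_{XZ},\bar{P}_{XZ}) \le (\varepsilon-\eta)/2$, so $\bar{P}_{XZ} \in \bar{{\cal B}}^{(\varepsilon-\eta)/2}(P_{XZ})$.

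The key step is the lower bound on $H_2(\bar{P}_{XZ}|R_Z)$. I would split the order-two summand as $P_{XZ}(x,z)^2/R_Z(z) = (P_{XZ}(x,z)/R_Z(z))^{1-\theta}\cdot P_{XZ}(x,z)^{1+\theta}/R_Z(z)^{\theta}$, which is an exact identity. On the surviving set the first factor obeys $(P_{XZ}/R_Z)^{1-\theta} < e^{-(1-\theta)r}$, valid because $1-\theta \ge 0$; dropping the indicator in the remaining sum and recognizing $\sum_{x,z} P_{XZ}(x,z)^{1+\theta}/R_Z(z)^{\theta} = e^{-\theta H_{1+\theta}(P_{XZ}|R_Z)}$ then gives $H_2(\bar{P}_{XZ}|R_Z) \ge (1-\theta)r + \theta H_{1+\theta}(P_{XZ}|R_Z)$. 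This interpolation is exactly the ``hybrid'' mechanism, reducing to the spectral bound at $\theta=0$ and to the plain $H_2$ bound at $\theta=1$.

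Finally I would apply the leftover hash lemma (Lemma~\ref{lemma:left-over}) to $\bar{P}_{XZ}$ together with the triangle inequality $d(F|P_{XZ}) \le (\varepsilon-\eta) + d(F|\bar{P}_{XZ})$. Choosing $|{\cal S}| = \lfloor 4\eta^2 e^{H_2(\bar{P}_{XZ}|R_Z)}\rfloor$ forces $\tfrac12\sqrt{|{\cal S}|\,e^{-H_2(\bar{P}_{XZ}|R_Z)}} \le \eta$, so $\mathbb{E}_F[d(F|P_{XZ})] \le (\varepsilon-\eta)+\eta = \varepsilon$ and some $f$ attains $d(f|P_{XZ}) \le \varepsilon$; therefore $\ell(P_{XZ},\varepsilon) \ge \log|{\cal S}| \ge H_2(\bar{P}_{XZ}|R_Z) + \log 4\eta^2 - 1$. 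Substituting the bound on $H_2$, letting $r \uparrow H_\rom{s}^{\varepsilon-\eta}(P_{XZ}|R_Z)$, and optimizing over $R_Z$ and $\theta$ yields \eqref{eq:hybrid-bound}. I expect the only genuine obstacle to be the splitting in the key step: the exponents must be arranged so that the truncation supplies precisely the factor $e^{-(1-\theta)r}$ while the leftover sum collapses exactly to $e^{-\theta H_{1+\theta}}$. The integer rounding of $|{\cal S}|$ accounts for the $-1$ and is routine.
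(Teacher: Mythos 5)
Your proposal is correct and follows essentially the same route as the paper's own proof: the same truncation smoothing $\bar{P}_{XZ}(x,z) = P_{XZ}(x,z)\bol{1}[-\log(P_{XZ}(x,z)/R_Z(z)) > r]$, the same $\theta$-splitting of $\bar{P}_{XZ}^2/R_Z$ into an $e^{-(1-\theta)r}$ factor and the order-$(1+\theta)$ R\'enyi sum, followed by the leftover hash lemma, the triangle inequality, and the choice of $|{\cal S}|$ yielding the $\log 4\eta^2 - 1$ term. Your only deviation, taking $r$ strictly below $H_\rom{s}^{\varepsilon-\eta}(P_{XZ}|R_Z)$ and passing to the limit, is a minor technical refinement (handling right-continuity of the tail probability) of the paper's direct substitution $r = H_\rom{s}^{\varepsilon-\eta}(P_{XZ}|R_Z)$, not a different approach.
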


\begin{proof}
We define the smoothed probability
\begin{eqnarray}
\label{eq:smoothed-probability}
\bar{P}_{XZ}(x,z) = P_{XZ}(x,z) \bol{1}\left[ - \log \frac{P_{XZ}(x,z)}{R_Z(z)} > r \right].
\end{eqnarray}
From Lemma \ref{lemma:left-over}, we have
\begin{eqnarray*}
\lefteqn{ \mathbb{E}_F\left[ d(F| \bar{P}_{XZ}) \right] } \\
&\le& \sqrt{ |{\cal S}| e^{- H_2(\bar{P}_{XZ}|R_z)}} \\
&=& \sqrt{ |{\cal S}| \sum_{x,z} \frac{\bar{P}_{XZ}(x,z)^2}{R_Z(z)} } \\
&\le& \sqrt{ |{\cal S}| \sum_{x,z} \frac{P_{XZ}(x,z)^{1+\theta}}{R_Z(z)^\theta} e^{- (1-\theta)r} } \\
&=& \sqrt{ |{\cal S}| \sum_{x,z} e^{- \theta H_{1+\theta}(P_{XZ}|R_Z) - (1- \theta) r}}.   
\end{eqnarray*}
By the triangular inequality, we have
\begin{eqnarray*}
\lefteqn{ \mathbb{E}_F\left[ d(F|\bar{P}_{XZ}) \right] } \\
&\le& 2 d(P_{XZ}, \bar{P}_{XZ} ) + \frac{1}{2} \sqrt{ |{\cal S}| e^{- \theta H_{1+\theta}(P_{XZ}|R_Z) - (1-\theta) r} } \\
&=& P_{XZ}\left\{ - \log \frac{P_{XZ}(x,z)}{R_Z(z)} \le r \right\} \\
&&~~~~~+  \frac{1}{2} \sqrt{ |{\cal S}| e^{- \theta H_{1+\theta}(P_{XZ}|R_Z) - (1-\theta) r} }.
\end{eqnarray*}
Thus, by setting $r = H_\rom{s}^{\varepsilon - \eta}(P_{XZ}|R_Z)$ and by taking $|{\cal S}|$ so that
\begin{eqnarray*}
\frac{1}{2} \sqrt{ |{\cal S}| e^{- \theta H_{1+\theta}(P_{XZ}|R_Z) - (1-\theta) r} } \le \eta,
\end{eqnarray*}
we have the statement of the theorem.
\end{proof}

Note that the bound in Theorem \ref{theorem:hybrid-bound-reference-system}
interpolates the lower bound in Theorem \ref{theorem:spectrum} and 
the bound in Eq.~(\ref{eq:large-deviation-bound-2}) of Theorem \ref{theorem:key-length-large-deviation-bound}.
More specifically, when the supremum in Eq.~(\ref{eq:hybrid-bound}) is achieved
by $\theta = 0$, then the bound in Eq.~(\ref{eq:hybrid-bound}) reduces to 
the bound in Theorem \ref{theorem:spectrum}.
To derive the bound in Eq.~(\ref{eq:large-deviation-bound-2}), we need some
large deviation calculation. By using Markov's inequality, we have
\begin{eqnarray*}
\lefteqn{ P_{XZ}\left\{ - \log \frac{P_{XZ}(x,z)}{R_Z(z)} \le r \right\}  } \\
&=&  P_{XZ}\left\{ \theta \log \frac{P_{XZ}(x,z)}{R_Z(z)} \ge - \theta r \right\} \\
&\le& \exp\left\{ \theta r - \theta H_{1+\theta}(P_{XZ}|R_Z) \right\}.
\end{eqnarray*}
Thus, we have
\begin{eqnarray}
\label{eq:lower-bound-on-spectrum}
H_\rom{s}^{\varepsilon - \eta}(P_{XZ}|R_Z) \ge H_{1+\theta}(P_{XZ}|R_Z) + \frac{1}{\theta} \log (\varepsilon - \eta).
\end{eqnarray}
By setting $\eta = \frac{\varepsilon}{3}$, $R_Z = P_Z$, and by substituting
Eq.~(\ref{eq:lower-bound-on-spectrum}) into Eq.~(\ref{eq:hybrid-bound}), we have
the bound in Eq.~(\ref{eq:large-deviation-bound-2}).

In \cite{hayashi:12d}, the optimal choice of $R_Z$ was shown to be
\begin{eqnarray}
R_Z(z) = \frac{\left( \sum_x P_{XZ}(x,z)^{1+\theta} \right)^{\frac{1}{1+\theta}}}{ \sum_z \left( \sum_x P_{XZ}(x,z)^{1+\theta} \right)^{\frac{1}{1+\theta}}}.
\label{eq:optimal-R}
\end{eqnarray}

\begin{remark}
To derive the bound in Eq.~(\ref{eq:large-deviation-bound-1}), we need to
use more complicated bounding.
To derive a non-asymptotic bound that subsume the bound in Eq.~(\ref{eq:large-deviation-bound-1}),
we need to introduce more artificial quantities instead of
$H_{1+\theta}(P_{XZ}|R_Z)$ and $H_\rom{s}^{\varepsilon - \eta}(P_{XZ}|R_Z)$.
\end{remark}

\section{Numerical Calculation}
\label{section:numerical-calcuation}

In this section, we consider the i.i.d. setting.
We consider the case such that $Z$ is obtained from $X$ throughout BSC, i.e.,
\begin{eqnarray}
P_{XZ}(x,x) = \frac{1 - q}{2}, 
~~~P_{XZ}(x, x+1) = \frac{q}{2}.
\label{eq:bsc-2}
\end{eqnarray}

In this case, since $P_Z$ is the uniform distribution on $\{0,1\}$, 
from Eq.~(\ref{eq:optimal-R}), the optimal choice of $R_Z$ is
$R_Z = P_Z$.
We have
\begin{eqnarray*}
\lefteqn{ P_{XZ}^n\left\{ - \log P_{X|Z}^n(x^n|z^n) \le r \right\} } \\
&=& B\left(n,q, \frac{r + n \log (1-q)}{\log \frac{1-q}{q}} \right),
\end{eqnarray*}
where $B(n,q,k)$ is the cumulative density function of the binomial trial.
Thus, the lower and upper bounds in Theorem \ref{theorem:spectrum} can be described as
\begin{eqnarray*}
\ell_{\rom{s},\rom{low}}(\varepsilon) \le \ell(P_{XZ}^n,\varepsilon) \le \ell_{\rom{s},\rom{up}}(\varepsilon),
\end{eqnarray*}
where
\begin{eqnarray}
\ell_{\rom{s},\rom{low}}(\varepsilon) 
&=& B^{-1}(n,q,\varepsilon - \eta) \times \log \frac{1-q}{q} \nonumber \\
&&~~~~~~~~~ - n \log (1-q) + \log 4 \eta^2 - 1
\label{eq:s-lower} \\
 \ell_{\rom{s},\rom{up}}(\varepsilon) 
&=&  B^{-1}(n,q, \varepsilon + \zeta) \times \log \frac{1-q}{q} \nonumber \\
&&~~~~~~~~~ - n \log (1-q) - \log \zeta 
\label{eq:s-upper}
\end{eqnarray}

For the distribution of the form in Eq.~(\ref{eq:bsc-2}), the bound
in Eqs.~(\ref{eq:large-deviation-bound-1}) and (\ref{eq:large-deviation-bound-2}) coincide.
We have
\begin{eqnarray*}
H_{1+\theta}(P_{XZ}|P_Z) 
= - \frac{1}{\theta} \log \left( q^{1+\theta} + (1-q)^{1+\theta} \right).
\end{eqnarray*}
Thus, the bounds in Theorem \ref{theorem:key-length-large-deviation-bound} can be described as
\begin{eqnarray*}
\ell(P_{XZ}^n,\varepsilon) \ge \ell_{\rom{e}, \rom{low}}(\varepsilon),
\end{eqnarray*}
where 
\begin{eqnarray}
\ell_{\rom{e}, \rom{low}}(\varepsilon)
&=& \sup_{0 < \theta \le 1} \frac{- n \log \left(q^{1+\theta} + (1-q)^{1+\theta} \right) }{\theta}  \nonumber \\
&&~~~~~~~+  \frac{(1-\theta)}{\theta} \log(2\varepsilon /3) - 1
\label{eq:e-lower}
\end{eqnarray}

Similarly, the bound in Theorem \ref{theorem:hybrid-bound-reference-system} can be described as
\begin{eqnarray*}
\ell(P_{XZ}^n,\varepsilon) \ge \ell_{\rom{h},\rom{low}}(\varepsilon),
\end{eqnarray*}
where 
\begin{eqnarray}
\lefteqn{ \ell_{\rom{h},\rom{low}}(\varepsilon) } \nonumber \\
&=& \max_{0 \le \theta \le 1}\left[ - n \log \left( q^{1+\theta} + (1-q)^{1+\theta} \right) + (1-\theta) \phantom{\frac{q}{q}} \right.  \nonumber \\ 
&& \left. \times \left\{ B^{-1}\left(n,q,\varepsilon - \eta \right) \times \log \frac{1-q}{q} - n \log (1-q) \right\} \right] 
 \nonumber \\
 && + \log 4 \eta^2 - 1.
\label{eq:h-lower}
\end{eqnarray}

For $\varepsilon = 10^{-10}$ and $q=0.11$, we plot $\ell_{\rom{s},\rom{low}}(\varepsilon)$, $\ell_{\rom{s},\rom{up}}(\varepsilon)$,
$\ell_{\rom{e}, \rom{low}}(\varepsilon)$, $\ell_{\rom{h},\rom{low}}(\varepsilon)$, and Gaussian approximation
derived by Theorem \ref{theorem:gaussian-approximation} in Fig.~\ref{Fig:fixed-epsilon}, where we set $\eta = \zeta = \frac{\varepsilon}{2}$.
From the figure, we can find that the exponential bound is better than the min-entropy bound up to
about $n = 10000$. The hybrid bound is better than both the exponential bound and the min-entropy bound.
The Gaussian approximation overestimate the lower bounds,
but it is sandwiched by the lower bounds and the upper bound.
\begin{figure}[t]
\centering
\includegraphics[width=\linewidth]{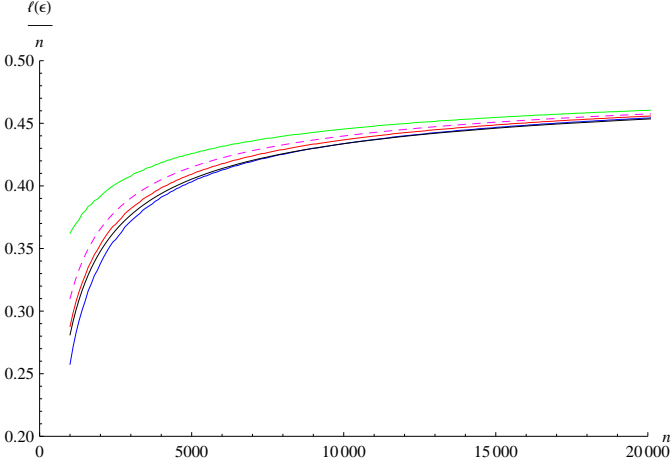}
\caption{A comparison among the bounds for $\varepsilon = 10^{-10}$ and $q=0.11$.
The blue curve is the min-entropy bound $\ell_{\rom{s},\rom{low}}(\varepsilon)$. 
The black curve is the exponential bound $\ell_{\rom{e}, \rom{low}}(\varepsilon)$.
The red curve is the hybrid bound $\ell_{\rom{h},\rom{low}}(\varepsilon)$.
The dashed pink curve is the Gaussian approximation.
The green curve is the upper bound $\ell_{\rom{s},\rom{up}}(\varepsilon)$.}
\label{Fig:fixed-epsilon}
\end{figure}

In Figs.~\ref{Fig:fixed-n-1000}, \ref{Fig:fixed-n-10000} and \ref{Fig:fixed-n-100000}, the bounds are compared 
from a different perspective, i.e., for fixed $n$ and varying $\varepsilon$.
From the figures, we can find that the exponential
bound and the hybrid bound become much better than the 
min-entropy bound as $\varepsilon$ becomes small.
When $\varepsilon$ is rather large for $n$, the min-entropy bound
is better than the exponential bound. 
The hybrid bound is better than both the exponential bound 
and the min-entropy bound for whole ranges of $\varepsilon$.

\begin{figure}[t]
\centering
\includegraphics[width=\linewidth]{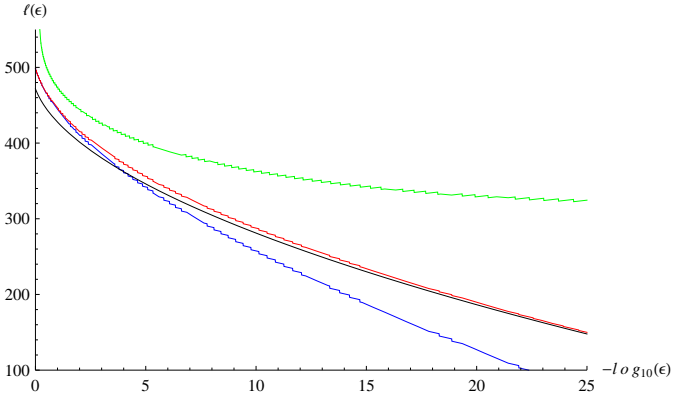}
\caption{A comparison among the bounds for $n = 1000$ and $q=0.11$.
The blue curve is the min-entropy bound $\ell_{\rom{s},\rom{low}}(\varepsilon)$. 
The black curve is the exponential bound $\ell_{\rom{e}, \rom{low}}(\varepsilon)$.
The red curve is the hybrid bound $\ell_{\rom{h},\rom{low}}(\varepsilon)$.
The green curve is the upper bound $\ell_{\rom{s},\rom{up}}(\varepsilon)$.}
\label{Fig:fixed-n-1000}
\end{figure}
\begin{figure}[t]
\centering
\includegraphics[width=\linewidth]{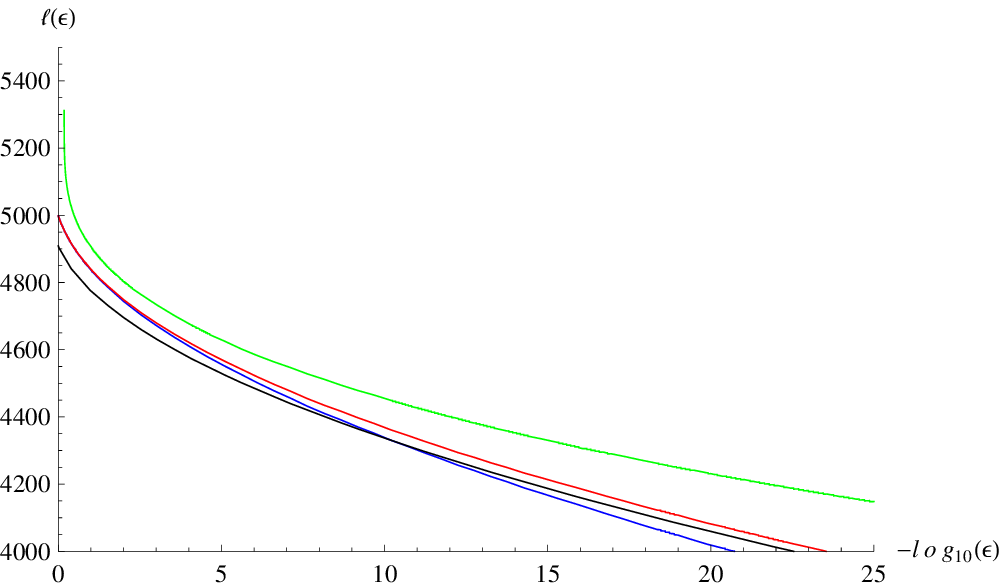}
\caption{A comparison among the bounds for $n = 10000$ and $q=0.11$.
The blue curve is the min-entropy bound $\ell_{\rom{s},\rom{low}}(\varepsilon)$. 
The black curve is the exponential bound $\ell_{\rom{e}, \rom{low}}(\varepsilon)$.
The red curve is the hybrid bound $\ell_{\rom{h},\rom{low}}(\varepsilon)$.
The green curve is the upper bound $\ell_{\rom{s},\rom{up}}(\varepsilon)$.}
\label{Fig:fixed-n-10000}
\end{figure}
\begin{figure}[t]
\centering
\includegraphics[width=\linewidth]{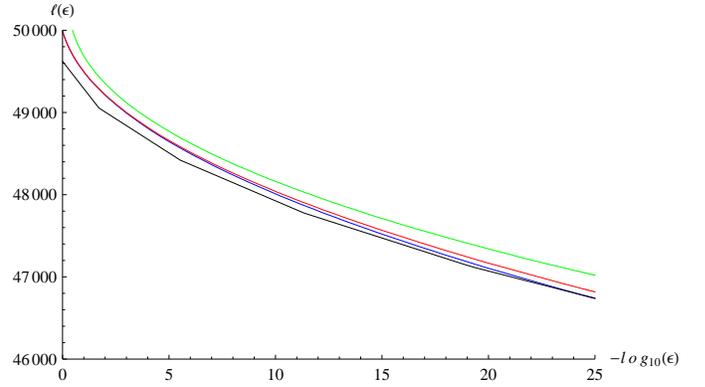}
\caption{A comparison among the bounds for $n = 100000$ and $q=0.11$.
The blue curve is the min-entropy bound $\ell_{\rom{s},\rom{low}}(\varepsilon)$. 
The black curve is the exponential bound $\ell_{\rom{e}, \rom{low}}(\varepsilon)$.
The red curve is the hybrid bound $\ell_{\rom{h},\rom{low}}(\varepsilon)$.
The green curve is the upper bound $\ell_{\rom{s},\rom{up}}(\varepsilon)$.}
\label{Fig:fixed-n-100000}
\end{figure}

\section{Conclusions}

In this paper, we have compared the exponential bound
and the min-entropy bound. It turned out that the exponential
bound is better than the min-entropy bound when $\varepsilon$
is rather small for $n$. When $\varepsilon$ is rather large for $n$,
the min-entropy bound is better than the exponential bound. We also 
presented the hybrid bound that interpolates the exponential
bound and the min-entropy bound.

For a future research agenda, it is important to extend the
results in this paper to the quantum setting or other information
theoretic security tasks such as the wire-tap channel.

\appendix

\subsection{Proof of Lemma \ref{lemma:monotonicity}}
\label{appendix:lemma:monotonicity}

Let $\tilde{P}_{SZ} \in {\cal B}^\varepsilon(P_{SZ})$ be such that 
\begin{eqnarray*}
H_{\min}^\varepsilon(P_{SZ}|R_Z) = H_{\min}(\tilde{P}_{SZ}|R_Z).
\end{eqnarray*}
Then, we define
\begin{eqnarray*}
\tilde{P}_{XZ}(x,z) = 
\tilde{P}_{SZ}(f(x),z) \frac{P_{XZ}(x,z)}{P_{SZ}(f(x),z)}.
\end{eqnarray*}
Then, we have
\begin{eqnarray*}
\lefteqn{ d(\tilde{P}_{XZ},  P_{XZ}) } \\
&=& \frac{1}{2} \sum_{x,z} |\tilde{P}_{XZ}(x,z) - P_{XZ}(x,z) | \\
&=& \frac{1}{2} \sum_{s,z} \sum_{x \in f^{-1}(s)} \frac{P_{XZ}(x,z)}{P_{SZ}(s,z)} |\tilde{P}_{SZ}(s,z) - P_{SZ}(s,z) | \\
&=& \frac{1}{2} \sum_{s,z}  |\tilde{P}_{SZ}(s,z) - P_{SZ}(s,z) | \\
&=& d(\tilde{P}_{SZ},  P_{SZ} ) \\
&\le& \varepsilon.
\end{eqnarray*}
Thus, we have $\tilde{P}_{XZ} \in  {\cal B}^\varepsilon(P_{XZ})$. Furthermore, by the construction of $\tilde{P}_{XZ}$, we have
$\tilde{P}_{XZ}(x,z) \le \tilde{P}_{SZ}(f(x),z)$ for every $(x,z)$. Thus, we have
\begin{eqnarray*}
H_{\min}^\varepsilon(P_{SZ}|R_Z) &=& H_{\min}(\tilde{P}_{SZ}|R_Z) \\
&\le& H_{\min}(\tilde{P}_{XZ}|R_Z) \\
&\le& H_{\min}^\varepsilon(P_{XZ}|R_Z).
\end{eqnarray*}
\qed

\subsection{Proof of Lemma \ref{lemma:spectrum-direct-bound}}
\label{proof:lemma:spectrum-direct-bound}

Let $r = H_\rom{s}^\varepsilon(P_{XZ}|R_Z)$. Then, let
\begin{eqnarray*}
\bar{P}_{XZ}(x,z) = P_{XZ}(x,z) \bol{1}\left[ - \log \frac{P_{XZ}(x,z)}{R_Z(z)} > r \right].
\end{eqnarray*}
Then, we have
\begin{eqnarray*}
d(P_{XZ}, \bar{P}_{XZ}) \le \frac{\varepsilon}{2}.
\end{eqnarray*}
Thus, we have $\bar{P}_{XZ} \in \bar{{\cal B}}^{\varepsilon/2}$. Thus, we have
\begin{eqnarray*}
\bar{H}_{\min}^{\varepsilon/2}(P_{XZ}|R_Z)
&\ge& H_{\min}(\bar{P}_{XZ}|R_Z) \\
&\ge& r \\
&=& H_\rom{s}^\varepsilon(P_{XZ}|R_Z).
\end{eqnarray*}
\qed

\subsection{Proof of Lemma \ref{lemma:spectrum-converse-bound}}
\label{proof:lemma:spectrum-converse-bound}

Let $r$ and $\tilde{P}_{XZ}$ be such that
$r = H_{\min}(\tilde{P}_{XZ}|P_Z) = H_{\min}^\varepsilon(P_{XZ}|P_Z)$.
For arbitrary fixed $\delta > 0$, let 
\begin{eqnarray*}
{\cal T} = \left\{ (x,z) : \frac{P_{XZ}(x,z)}{P_Z(z)} \le e^{- r + \delta} \right\}
\end{eqnarray*}
and
\begin{eqnarray*}
{\cal T}_z = \left\{ x : (x,z) \in {\cal T} \right\}.
\end{eqnarray*}
Then, we have $|{\cal T}_z^c| \le e^{r - \delta}$.
Furthermore, we have
\begin{eqnarray*}
d(P_{XZ}, \tilde{P}_{XZ}) 
&\ge& P_{XZ}({\cal T}^c) - \tilde{P}_{XZ}({\cal T}^c) \\
&=& P_{XZ}({\cal T}^c) - \sum_z \sum_{x \in {\cal T}_z^c} \tilde{P}_{XZ}(x,z) \\
&\ge& P_{XZ}({\cal T}^c) - \sum_z \sum_{x \in {\cal T}_z^c} e^{-r } P_Z(z) \\
&\ge& P_{XZ}({\cal T}^c) - e^{- \delta}.
\end{eqnarray*}
Thus, we have
\begin{eqnarray*}
\lefteqn{ r - \delta } \\ 
&\le& \sup\left\{ r^\prime : P_{XZ}\left\{ - \log \frac{P_{XZ}(x,z)}{P_Z(z)} \le r^\prime \right\} \le \varepsilon + e^{-\delta} \right\} \\
&=& H_\rom{s}^{\varepsilon + e^{-\delta}}(P_{XZ}|P_Z),
\end{eqnarray*}
which implies
\begin{eqnarray*}
H_{\min}^\varepsilon(P_{XZ}|P_Z)
&\le& H_\rom{s}^{\varepsilon + e^{-\delta}}(P_{XZ}|P_Z) + \delta \\
&=& H_\rom{s}^{\varepsilon + \zeta}(P_{XZ}|P_Z) - \log \zeta.
\end{eqnarray*}
\qed



\begin{thebibliography}{10}
\providecommand{\url}[1]{#1}
\csname url@samestyle\endcsname
\providecommand{\newblock}{\relax}
\providecommand{\bibinfo}[2]{#2}
\providecommand{\BIBentrySTDinterwordspacing}{\spaceskip=0pt\relax}
\providecommand{\BIBentryALTinterwordstretchfactor}{4}
\providecommand{\BIBentryALTinterwordspacing}{\spaceskip=\fontdimen2\font plus
\BIBentryALTinterwordstretchfactor\fontdimen3\font minus
  \fontdimen4\font\relax}
\providecommand{\BIBforeignlanguage}[2]{{%
\expandafter\ifx\csname l@#1\endcsname\relax
\typeout{** WARNING: IEEEtran.bst: No hyphenation pattern has been}%
\typeout{** loaded for the language `#1'. Using the pattern for}%
\typeout{** the default language instead.}%
\else
\language=\csname l@#1\endcsname
\fi
#2}}
\providecommand{\BIBdecl}{\relax}
\BIBdecl

\bibitem{bennett:95}
C.~H. Bennett, G.~Brassard, C.~Cr\'epeau, and U.~Maurer, ``Generalized privacy
  amplification,'' \emph{IEEE Trans. Inform. Theory}, vol.~41, no.~6, pp.
  1915--1923, Nov. 1995.

\bibitem{renner:05b}
R.~Renner, ``Security of quantum key distribution,'' Ph.D. dissertation, Dipl.
  Phys. ETH, Switzerland, February 2005, arXiv:quant-ph/0512258, also available
  from International Journal of Quantum Information, vol.~6, no.~1, pp.~1--127,
  February 2008.

\bibitem{renner:05d}
R.~Renner and S.~Wolf, ``Simple and tight bound for information reconciliation
  and privacy amplification,'' in \emph{Advances in Cryptology -- ASIACRYPT
  2005}, ser. Lecture Notes in Computer Science, vol. 3788.\hskip 1em plus
  0.5em minus 0.4em\relax Springer-Verlag, 2005, pp. 199--216.

\bibitem{tomamichel:phd}
M.~Tomamichel, ``A framework for non-asymptotic quantum information theory,''
  Ph.D. dissertation, Dipl. Phys. ETH, Switzerland, 2012, arXiv:1203.2142.

\bibitem{tomamichel:11b}
M.~Tomamichel, C.~Schaffner, A.~Smith, and R.~Renner, ``Leftover hashing
  against quantum side-information,'' \emph{IEEE Trans. Inform. Theory},
  vol.~57, no.~8, pp. 5524--5535, August 2011, arXiv:1002.2436.

\bibitem{hayashi:10}
M.~Hayashi, ``Exponential decreasing rate of leaked information in universal
  random privacy amplification,'' \emph{IEEE Trans. Inform. Theory}, vol.~57,
  no.~6, pp. 3989--4001, June 2011, arXiv:0904.0308.

\bibitem{hayashi:10b}
------, ``Tight exponential evaluation for information theoretical secrecy
  based on $l_1$ distance,'' arXiv:1010.1358.

\bibitem{tomamichel:12}
M.~Tomamichel and M.~Hayashi, ``A hierarchy of information quantities for
  finite block length analysis of quantum tasks,'' 2012, arXiv:1208.1478.

\bibitem{hayashi:09}
M.~Hayashi, ``Information spectrum approach to second-order coding rate in
  channel coding,'' \emph{IEEE Trans. Inform. Theory}, vol.~55, no.~11, pp.
  4947--4966, November 2009.

\bibitem{polyanskiy:10}
Y.~Polyanskiy, H.~V. Poor, and S.~Verdu, ``Channel coding rate in the finite
  blocklength regime,'' \emph{IEEE Trans. Inform. Theory}, vol.~57, no.~5, pp.
  2307--2359, May 2010.

\bibitem{datta:09}
N.~Datta and R.~Renner, ``Smooth entropies and the quantum information
  spectrum,'' \emph{IEEE Trans. Inform. Theory}, vol.~55, no.~6, pp.
  2807--2815, June 2009.

\bibitem{han:book}
T.~S. Han, \emph{Information-Spectrum Methods in Information Theory}.\hskip 1em
  plus 0.5em minus 0.4em\relax Springer, 2003.

\bibitem{hayashi:07b}
M.~Hayashi, ``Error exponent in asymmetric quantum hypothesis testing and its
  application to classical-quantum channel coding,'' \emph{Phys. Rev. A},
  vol.~76, no.~6, p. 062301, December 2007, arXiv:quant-ph/0611013.

\bibitem{dembo-zeitouni-book}
A.~Dembo and O.~Zeitouni, \emph{Large Deviations Techniques and Applications},
  2nd~ed.\hskip 1em plus 0.5em minus 0.4em\relax Springer, 1998.

\bibitem{tomamichel:09}
M.~Tomamichel, R.~Colbeck, and R.~Renner, ``A fully quantum asymptotic
  equipartition property,'' \emph{IEEE Trans. Inform. Theory}, vol.~55, no.~12,
  pp. 5840--5847, December 2009, arXiv:0811.1221.

\bibitem{tomamichel:10}
------, ``Duality between smooth {M}in- and {M}ax-{E}ntropies,'' \emph{IEEE
  Trans. Inform. Theory}, vol.~56, no.~9, pp. 4674--4681, September 2010,
  arXiv:0907.5238.

\bibitem{hayashi:12d}
M.~Hayashi, ``Classical and quantum security analysis via smoothing of
  {R}\'enyi entropy of order $2$,'' arXiv:1202.0322.

\end{thebibliography}
\end{document}